\definecolor{myurlcolor}{rgb}{0,0,0.7}
\definecolor{myrefcolor}{rgb}{0.8,0,0}
\newtheorem{theorem}{Theorem}[section]
\newtheorem{lemma}{Lemma}[section]
\newtheorem{definition}{Definition}[section]
\newtheorem{axiom}{Axiom}
\newtheorem{remark}{Remark}[section]
\definecolor{ppblue}{RGB}{46,117,182}
\definecolor{ppred}{RGB}{197, 90, 17}
\def\>{\rangle}
\def\<{\langle}
\begin{document}

\title{The rank of contextuality}

\author{Karol Horodecki \orcidlink{0000-0001-7540-4147}}
\email[Correspondence: ]{karol.horodecki@ug.edu.pl}
\affiliation{Institute of Informatics, National Quantum Information Centre, Faculty of Mathematics, Physics and Informatics, University of Gdańsk, Wita Stwosza 57, 80-308 Gdańsk, Poland}
\affiliation{International Centre for Theory of Quantum Technologies, University of Gdańsk, Wita Stwosza 63, 80-308 Gdańsk, Poland}

\author{Jingfang Zhou}
\affiliation{Department of Applied Physics, Graduate School of Engineering, University of Tokyo, 7-3-1 Hongo, Bunkyo-ku, Tokyo 113-8656, Japan}

\author{Maciej Stankiewicz \orcidlink{0000-0001-8288-3860}}
\affiliation{International Centre for Theory of Quantum Technologies, University of Gdańsk, Wita Stwosza 63, 80-308 Gdańsk, Poland}

\author{Roberto Salazar \orcidlink{0000-0003-1737-1433}}
\affiliation{Faculty of Physics, Astronomy and Applied Computer Science, Jagiellonian University, 30-348 Kraków, Poland}

\author{Pawe\l{} Horodecki \orcidlink{0000-0002-3233-1336}}
\affiliation{International Centre for Theory of Quantum Technologies, University of Gdańsk, Wita Stwosza 63, 80-308 Gdańsk, Poland}
\affiliation{Faculty of Applied Physics and Mathematics, National Quantum Information Centre, Gdańsk University of Technology, Gabriela Narutowicza 11/12, 80-233 Gdańsk, Poland}

\author{Robert Raussendorf \orcidlink{0000-0003-4983-9213}}
\affiliation{Department of Physics and Astronomy, University of British Columbia, Vancouver, BC V6T 1Z1, Canada}

\author{Ryszard Horodecki \orcidlink{0000-0003-2935-290X}}
\affiliation{International Centre for Theory of Quantum Technologies, University of Gdańsk, Wita Stwosza 63, 80-308 Gdańsk, Poland}

\author{Ravishankar Ramanathan \orcidlink{0000-0003-1119-8721}}
\affiliation{Department of Computer Science, The University of Hong Kong, Pokfulam Road, Hong Kong}

\author{Emily Tyhurst \orcidlink{0000-0002-7903-4056}}
\affiliation{Department of Physics, University of Toronto, Toronto, ON, Canada}
\affiliation{Department of Physics and Astronomy, University of British Columbia, Vancouver, BC V6T 1Z1, Canada}

%\date{\today}% Change to constant date before publishing!
\date{May 20, 2022}
	
\begin{abstract} 
    Quantum contextuality is one of the most recognized resources in quantum communication and computing scenarios.
    We provide a new quantifier of this resource, the rank of contextuality (RC). 
    We define RC as the minimum number of non-contextual behaviors that are needed to simulate a contextual behavior. 
    We show that the logarithm of RC is a natural contextuality measure satisfying several properties considered in the spirit of the resource-theoretic approach. 
    The properties include faithfulness, monotonicity, and additivity under tensor product.
    We also give examples of how to construct contextual behaviors with an arbitrary value of RC exhibiting a natural connection between this quantifier and the arboricity of an underlying hypergraph.
    We also discuss exemplary areas of research in which the new measure appears as a natural quantifier.
\end{abstract}

\keywords{contextuality, quantum}

\maketitle

%===============================================================================
% Begin document body
%===============================================================================
		
\section{Introduction}
\label{sec:introduction}

Quantum contextuality is one of the oldest genuinely quantum phenomena \cite{Kochen-Specker, Joshi_bound, Peres_Incompatible, Mermin_quantum_probability} which has been intensively investigated in recent years (see \cite{CReview} for a recent review).
A number of its applications, which we will discuss shortly in what follows, make it a resource in the context of classical and quantum information processing \cite{Chitambar2019}.
This phenomenon is characterized by a set of partially commuting observables with property $\mathcal{S}$. 
Namely, they cannot be globally described by a distribution of value assignments of the outcomes of respective measurements before these measurements happen to be performed.
The values observed depend on the context in which given observables are measured, i.e., a subset $\mathcal{C}\subset\mathcal{S}$ of mutually commuting observables. 
In the specific case where commutation relations are imposed by spatial separation of the observables, contextuality is known as non-locality. 
This type of contextuality has attracted loads of attention, reaching practical ideas in the realm of device-independent information processing \cite{Bell-nonlocality}.

The research on contextuality took the direction of finding low dimensional examples of this phenomenon (exhibited by a relatively small number of observables, each with a low number of outcomes) \cite{Cabello_exper_testable, Yu2012, Klyachko2008}.
Further, a natural question was: is contextuality a resource, like non-locality? 
As we have noted, contextuality is a resource in several applications.
Just to mention a few there is among them the zero-error channel capacity \cite{CLMW}, quantum device-independent cryptography \cite{Context_crypto}, magic-state based quantum computation \cite{Howard_etal, BermejoVega2017}, quantum computation using shallow circuits \cite{Bravyi2018} and one-way communication \cite{Saha2019}, and capacity of multiple access channels \cite{Leditzky2020}.
For further development based on these seminal results, we refer the reader to \cite{Amaral2, Amaral8} and the review \cite{CReview}.

Another question is: how can one quantify contextuality? 
Four major ways to achieve this goal are: the contextuality fraction \cite{Abramsky_contextualfrac}, relative entropy of contextuality \cite{Grudka_contextuality}, robustness of contextuality \cite{Amaral_Geo}, and memory of contextuality \cite {Kleinmann_memory}. 
We focus on the latter quantity in this manuscript and define it in a novel way. 
In \cite{Kleinmann_memory} the memory is a resource connected to contextuality needed to simulate the behavior of a system. 
A specific scenario of the Peres-Mermin Game \cite{Peres_Incompatible, Mermin_quantum_probability} was addressed there. 
The memory cost is counted as the logarithm of the number of states of a finite automaton, which simulates a given set of observables. 
The simulation reproduces observable outputs based on sequences of inputs to an (in general probabilistic) automaton. 
The generated outputs satisfy the constraints of the Peres-Mermin Game \cite{Kleinmann_memory} or even reproduce the measurement outcomes of the interrogated observables acting on the internal quantum state \cite{Fagundes}.

\subsection{The scenario and main results} 
\label{sec:scenario}

In the formalism of \cite{Karol_Axiomatic}, the problem becomes as follows. 
We have a given set of partially commuting observables. 
Given a quantum state, by measuring subsets of mutually commuting observables on that state, we generate a behavior, which we call \textit{challenging behavior} $C$. 
We imagine a \textit{simulator} $\mathcal{S}$  as a machine (finite automaton) that is built from a certain number of non-contextual behaviors, i.e., obtainable in nature without the use of the set of contextual observables (we consider them as free). 
We consider these behaviors as the \textit{states of memory} of the machine.

We say that the machine simulates our challenging behavior if, for any context, when asked to measure it, the automaton outputs a sequence of symbols that are drawn from a distribution of that context, according to the definition of $C$.
In such a case, the machine is operationally indistinguishable from interrogating the behavior $C$ directly. 

Let us note here that the simulator given in \cite{Fagundes} does not satisfy this last property: it is easy to verify from the frequency of output (given a sufficiently long sequence) that the output string is not compatible with independent measurements of Peres-Mermin observables on a quantum state.

In the Theory of Finite Automata, the number of states is a quantifier of the needed memory to capture the behavior of a system. 
In our case, the states correspond to distinct non-contextual behaviors. 
Following the formalism of \cite{Karol_Axiomatic}, we propose the logarithm of the number of the automaton states to measure the \textit{memory cost} for simulating the contextuality of a given set of observables.

Let us stress here that this approach does not measure \textit{total} memory cost of simulating the set of observables.
In particular: the memory cost of a non-contextual set of quantum observables is by definition zero (see Figure \ref{fig:aut} for an illustration of this behavior).
One could, of course, fine-grain the outlook and measure the memory cost of specific non-contextual behaviors. 
Or use completely different ``states'' of memory, e.g., \textit{classical distribution of context}. 
In what follows, we will propose two natural axioms which may serve as a reason for our choice of measure. 
The first one states that any contextuality measure should be zero for non-contextual behaviors. 
The second extends it by saying that the measure on contextual behaviors has to be larger than any value of this measure on the non-contextual one. 
The latter axiom rules out simulations by contexts, as explained above.

Our second contribution, in the spirit of the resource theoretic approach of \cite{Acn2015, Karol_Axiomatic}, is the demonstration that the (log\nobreakdash-)rank of contextuality satisfies several of the axioms proposed in \cite{Karol_Axiomatic}.
It includes faithfulness, monotonicity under the broad class of operations (including some type of processing via so-called wirings \cite{Cabello_wirings}), and additivity on the tensor product of behaviors. 
We also support the fact that the (log\nobreakdash-)rank of contextuality is a novel measure by comparing it with other previously known measures.
We then provide an upper bound on the rank of contextuality based on a graph-theoretic quantity called \textit{arboricity} \cite{graph}. 
The latter quantity is the number of distinct forests (possibly disconnected acyclic graphs) that one needs to split a graph so that every edge belongs to one forest. 
A forest in our approach corresponds to a non-contextual behavior and a collection of such behaviors allows us to simulate the challenging behavior. 

Finally, we discuss possible applications of the measures in two research areas. 
First, we propose to describe \textit{probabilistic databases} (PDBs) \cite{Sa2019} employing our contextuality simulator $\mathcal{S}$, and we argue how the rank of contextuality would allow quantifying the storage overhead due to noise in strongly correlated attributes of the database \cite{ZhuW2004}. 
Furthermore, our analysis of the minimum number of non-contextual behaviors necessary to simulate a PDB would allow us to describe the minimum repairs of unclean PBDs for the above class of noises.
The second is the attacks by a non-signaling adversary on the device amplifying the private randomness of the so-called Santha-Vasirani source \cite{SV, Wojewodka2017, Colbeck2012}.

The remainder of this manuscript is organized as follows.
In Section \ref{sec:preliminaries}, we give introductory information.
In Section \ref{sec:logrank}, we define the (log-)rank of contextuality and discuss its connections to the automaton that simulates behaviors.
In Section \ref{sec:lr_is_measure}, we show that the log-rank is a contextuality measure and that it fulfills some valuable properties. 
In Section \ref{sec:aditivity}, we prove that log-rank admits additivity under the tensor product operations and briefly discuss log-rank for a mixture of behaviors.
In Section \ref{sec:unboundedness}, we provide examples of calculating the log-rank for some classes of graphs and behaviors. 
Furthermore, we show an achievable upper bound of the rank of contextuality based on graph arboricity.
In Section \ref{sec:comparison}, we compare log-rank with other, previously known, measures of contextuality. 
In Section \ref{sec:applications}, we discuss two possible applications of our measure of contextuality.
Finally, in Section \ref{sec:discussion}, we summarize our results and provide some open questions for further research.

\section{Preliminaries}
\label{sec:preliminaries}

With any set of observables $V=\{V_i\}$, we can identify their commutation relations and represent them by a hypergraph $H$ on this set where hyperedges are subsets of the powerset of $V$ that denote contexts, i.e., sets of mutually commuting observables. 
A measurement selection $X\in E$ is an \textit{input}. 
We refer to \textit{output} $A$ given input $X$ as an ordered list of outcomes $(a_{1}, a_{2}, \ldots, a_{|X|}))$, that is, it assigns a value to every observable in the context corresponding to $X$. 

The set of all conditional probability distributions denoted as $P(A \mid X)$, such that $\sum_{A} P(A \mid X)=1$ for every $X$, defines a convex polytope $\mathcal{P}$.
The polytope is additionally constrained by the following consistency condition: 
\begin{equation}
    \begin{split}
        \mathop\forall_{X, Y \in E } 
        \mathop\forall_{\{ C_k \}_{k \in X \cap Y}} 
        \sum_{\{ A_i \}_{i \in X \setminus Y}} P(\{ A_i \}, \{ C_k \} \mid X)\\ 
        = \sum_{\{ B_j \}_{j \in Y \setminus X}} P(\{ C_k \}, \{ B_j \} \mid Y) \label{consistencypairwise}
    \end{split}
\end{equation}
where $A_i, B_j, C_k$ denotes outputs of observable $i, j, k$ respectively.
Hence it is called the \textit{consistency polytope}.

We refer to a given point in polytope $P(A \mid X)$ as a \textit{behavior}, with input $X$ and output $A$. 
Each behavior in $\mathcal{P}$ can be either contextual or non-contextual. 
By non-contextual we mean behaviors that have a so-called non-contextual hidden variable model, so that 
\begin{equation}
    P_{NC}(A \mid X) = \sum_i p_i D_i(A \mid X),
    \label{NCbehavior}
\end{equation}
where $D_i(A \mid X)$ are deterministic behaviors, that is, for every context $X$ there exists
unique $A=a$, which is the output with probability $1$. 
Moreover, there exists a deterministic joint distribution for all the observables, $J = \sum_j q_j D_j (A \mid V)$ where $D_j(A \mid V)$ is a behavior that outputs with probability $1$ some fixed vector of measurement outcomes $(A_{1}, \ldots, A_{|V|})$ of all the observables from $V$.
In other words: a non-contextual behavior $P_{NC}$ is a probabilistic mixture of deterministic behaviors that have fixed values of outcomes to any observable beforehand. 
The set of all non-contextual behaviors forms a polytope within $\mathcal{P}$, called non-contextual polytope $\mathcal{N}$. 
Any behavior in $\mathcal{P} \setminus \mathcal{N}$ is called contextual. 

The combination of two behaviors, $P_{1}, P_{2}$ is given by a tensor product, $P_{1} \otimes P_{2}$, which is defined by the set of all products of the probability distributions of $P_{1}, P_{2}$. 
I.e., if $P_{1} = \{ q_{\alpha}\}, P_{2}= \{p_{\beta}\}$ then $P_{1} \otimes P_{2} = \{p_{\alpha} q_{\beta}\}$.

\subsection{Two models of simulation}

In this section, we describe, in detail, two models of simulating sets of observables.
The first one has been studied in the literature in the context of memory of contextuality \cite{Kleinmann_memory}. 
We will refer to it as \textit{observable-interrogation model}. 
The second, called here \textit{context-interrogation model} is intimately connected to a tomographic scenario.

In the observable-interrogation model, the verifier checks if a given system is contextual by asking a device to measure observables given in sequence.
We expect that in steps $i=1, 2, \ldots$, the description of an observable $A_i$ is given as an input to a device.
The device has a simulator inside (a prover). 
The simulator is required to reproduce data in the same manner as if a certain physical system was inside instead.
The system to be reproduced is represented by two entities.
First is a quantum state $\rho$.
The second is a (possibly infinite) physical realization of the observables from the sequence $A_1, A_2, \ldots$. 
The outputs of the device should correspond to subsequent measurements: $A_1$ on $\rho$, $A_2$ on the output state of the measurement
of $A_1$ on $\rho$, i.e., $A_2(A_1(\rho))$ and so on. 
When $A_i$ is asked twice in a row, the device should respond with the same output twice. 
An automaton choosing at random from several deterministic automatons that meet the requirements mentioned above has been given in \cite{Fagundes}.

A tomographic model that we also term here the context-interrogation model is much simpler as it requires less from the prover. 
At each step $i=1,2,\ldots$ (possibly infinite), a description of a context $X_i$ is set to a device, that has a simulator inside. 
The simulator, made by the prover, should result in outputs in a manner as if it contains a system that works in the following way:
In each step, a new copy of a fixed quantum state $\sigma$ is selected, all observables from the context $X$ are measured, and generate the outputs of the simulation device.
We remark that experimentalists often use the above simple model in the lab to verify the contextuality of a set of observables.

In \cite{Kleinmann_memory} the number of states of the previously described automaton (which can be in general probabilistic \cite{Fagundes}) determines the contextuality of the simulated device.

\section{The (log-)rank of contextuality as memory of a simulating automaton}
\label{sec:logrank}

In the tomographic scenario described in the previous section, as a measure of memory cost of simulation for a given behavior $C$, we propose the following: \textit{The number (and logarithm of that number) of distinct NC behaviors that are needed to simulate $C$}. 
Formally it reads:

\begin{definition}[Rank of contextuality]
    Let $P$ be a behavior.
    Let $\mathcal{N}$ be the set of all non-contextual behaviors, each of the form $N_i(A \mid X)$.
    Then we define the rank of contextuality denoted as $RC$ in the following way
    \begin{equation}
        \label{def:mem-cost}
        \begin{split}
            & RC(P)  \coloneqq  \min \bigg\{|S| : S \subset \mathcal{N}, \\
            & \left. \mathop\forall_{X=i} \quad \mathop\exists_{N(A \mid X) \in S} \quad \mathop\forall_{A} \quad N(A \mid X=i) = P(A \mid X=i) \right\},
        \end{split}
    \end{equation}
    where $|S|$ denotes cardinality of the set $S$.
\end{definition}

We will also define the logarithmic version of the rank of contextuality in the following way. 
    
\begin{definition}[Log-rank]
    Let $P$ be a behavior that has the rank of contextuality $RC(P)$.
    Then the log-rank denote as $RC_2$ is given by
    \begin{equation}
        RC_2(P) \coloneqq \log_2 RC(P).
    \end{equation}
\end{definition}

Let us discuss why we can view this number as a quantifier of memory.
As an initial example, we address the case of Peres-Mermin square game (see Figure \ref{fig:PM}).
We exemplify an automaton that simulates the PM game (see Figure \ref{fig:aut}).
The extension of this example for the general case of contextual behavior is straightforward.
More formally, a deterministic automaton is a tuple 
\begin{equation} 
    \mathbb{T}\coloneqq\<S,\Sigma,q_0,\delta,F\>
    \label{automaton}
\end{equation}
where $S$ is the set of \textit{states} and $\Sigma$ is an \textit{alphabet of the inputs to the automaton}. 
The state $q_0 \in S$ is the initial state of the automation and $\delta:S\times \Sigma\rightarrow S$ is a \textit{transition function}, which given state and an input symbol outputs a state. 
The set $F \subset S$ is the set of \textit{final states}.

In our case $S$ is the set of non-contextual behaviors and a symbol $\omega \in \Sigma$ which denotes finishing of the use of the automaton: $S \coloneqq \{N_1,\ldots ,N_n,q_0,F\}$. 
The set $\Sigma$ is the set of contexts $C$ of a contextual behavior that is simulated by $\mathbb{T}$. 
The transition function $\delta$ is defined as follows: if a pair $(N_i,C)$ is such that $C \in N_i$, the next state is $\delta(N_i,C) \coloneqq N_i$. 
If it is not the case, then the next state is any fixed $\delta(N_i,C) \coloneqq N_j$ such that
$C\in N_j$. 
The initial state $q_0$ and final state $F$ are added artificially, so that $\delta(q_0,C)=N_k$ such that $C \in N_k$ for any $k$ and $\delta(N_l,f)=F$ for any state $N_l$. 

Following \cite{Fagundes}, the memory of the automaton $\mathbb{T}$ described above can be naturally taken as $\log n$, i.e., the logarithm of the number of its states, where we do not take into account the artificially introduced initial and final states. 
The simplicity of this model lies in the fact that it does not allow for repetitive measurements. 
A context measured twice can yield different outcomes. 
This model, however, is powerful enough to measure the contextual inequalities that in the case of the Peres-Mermin scenario are equivalent to average values of the observables from all the contexts \cite{Peres_Incompatible, Mermin_quantum_probability}:
\begin{equation}
    \<C_1\>+\<C_2\>+\<C_3\>+\<R_1\>+\<R_2\>+\<R_3\>\leq 5.
\end{equation}

\begin{figure}[htbp]
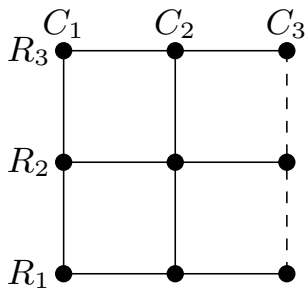

    %\fbox{
    \includestandalone[width=0.5\columnwidth]{figures/pm}
    %}
    \caption{
        \label{fig:PM}
        The hypergraph of the Peres-Mermin game. 
        The $6$ contexts are divided into two sets -- of columns $C_1,C_2,C_3$ and rows $R_1,R_2,R_3$. 
        The solid line represents the even distribution of outputs and the dashed line the odd one.
    }
\end{figure}

\begin{figure}[htbp]
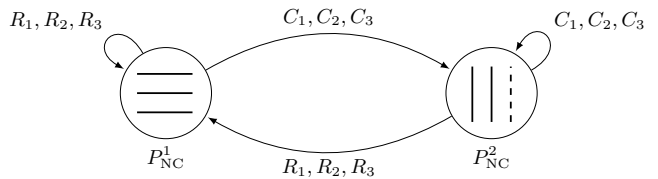

    %\fbox{
    \includestandalone[width=\columnwidth]{figures/automaton}
    %}

    \caption{
        \label{fig:aut}
        The finite automaton which is simulating the PM uniform behavior. 
        Contexts represented by solid lines are even distributions, and dashed are odd distributions. 
        Given input $X$ from alphabet $\{C1, C2, C3, R1, R2, R3\}$ it outputs symbols from a distribution taken from a context $X$ of a non-contextual behavior described by its current ``state'' of memory: either $P_{NC}^1$ or $P_{NC}^2$. 
        The arrows denote the change of states (loops -- staying in the same state), while the symbols above them rule under which letter of the alphabet the change of the state of memory happens. 
        The $P_{NC}^1$ is defined such that its rows are as those of the Peres-Mermin behavior, while the columns are the products of 3 maximally mixed 1-bit distributions. 
        The behavior $P_{NC}^2$ is defined analogously: the columns are defined as such as the corresponding columns of the Peres-Mermin behavior, while rows have maximally mixed distribution.
    }
\end{figure}

\section{Log-rank is a Contextuality Measure}
\label{sec:lr_is_measure}

We now discuss the criterion necessary for log-rank to be a contextuality measure and, more importantly, set up the framework for a resource theory surrounding it. 
In that respect, we set as an axiom
\begin{axiom}[Axiom 0]
    \label{axiom:0}
    Any measure of memory cost of behavior simulation due to its contextuality should be $0$ for non-contextual behaviors. 
\end{axiom}

It is clearly in case of our measure $RC_2$: the non-contextual behaviors the non-contextual behavior simulates itself, so the minimal $|S|$ equals $1$ in Eq.\ (\ref{def:mem-cost}). 
In the case that a contextuality measure does not satisfy Axiom \ref{axiom:0}, the following axiom is sufficient:
\begin{axiom}[Static Monotonicity]
    \label{axiom:monotonicity}
    Any measure $M$ of memory cost of simulation of a behavior due to its contextuality
	\begin{equation}
	    \mathop\forall_{\substack{P \in \mathcal{N}\\C \in \mathcal{P}\setminus \mathcal{N}}} M(P) \leq M(C).
	\end{equation}
\end{axiom}

This assumption rules out the simulation, which as the ``states'' of an automata's memory use just classical distributions of the contexts.
If it was not for the above axiom, we could design a deterministic behavior, which has more (exactly $5$) distinct distributions of contexts than the Peres-Mermin one (that has $2$ of them see Figure \ref{fig:PM}), e.g., a deterministic (hence non-contextual) behavior
\begin{equation}
    \begin{bmatrix}
        0 & 0 & 0 \\
        0 & 0 & 1 \\
        1 & 1 & 1 \\
    \end{bmatrix}
\end{equation} 
where $x$ takes the values of rows or columns. 
There would be also a non-contextual behavior which has only $1$ type of distribution $D(a \mid x)$ defined as $ \forall_x \delta_{a, 0}$, i.e., behavior outputting deterministically all zeros.
It implies that a measure that does not satisfy Axiom \ref{axiom:monotonicity} would not reflect contextuality as a reason for the cost of memory of the simulation.

Let us note that the (log-)rank of contextuality also satisfies Axiom \ref{axiom:monotonicity}. 
Further, one might wonder if the (log-)rank of contextuality is monotonic under standard resource theoretic operations \cite{Grudka_contextuality, Karol_Axiomatic}. 
Since the \textit{free resources} are explicitly non-contextual, (log-)rank has a direct relationship to quantifying contextuality. 
In what follows, we will consider the set of ``free'' operations that include the ones introduced in \cite{Karol_Axiomatic} extended by the operations of wirings \cite{Cabello_wirings} and partial measurement.
We will show that the log-rank resource does not increase under such actions. 
The operations are:
\begin{enumerate}
    \item 
        \textbf{Adding a non-contextual behavior.} 
        By this operation we mean for a contextual challenging behavior $C(a \mid x)$ 
        \begin{equation}
            \mathcal{M}(C(a \mid x))  \mapsto C(a \mid x)\otimes P_{NC}(b \mid y).
        \end{equation}
    \item 
        \textbf{(Partial) measurement of a behavior}, e.g., 
        \begin{equation}
            P(a,b \mid x,y) \mapsto P(a,b \mid x=x_0,y),
        \end{equation}
        where a distribution $P(a,b \mid x=x_0,y)$ is still a behavior with possibly multiple input $y$ but single input $x=x_0$. 
        Measurement can be complete, resulting in behavior with single input and possibly multiple outputs, i.e., a distribution embedded into the polytope $\mathcal{P}$.
    \item 
        \textbf{Proper, simple wirings.} By simple wiring we mean here the operation:
        \begin{equation}
            P_1(a \mid x)\otimes P_2(b \mid y) \mapsto \sum_c P_1(a \mid c)\otimes P_2(c \mid y),
        \end{equation}
        where we assume that dimensions of output of $P_2$ (i.e., $|b|$), and the input of $P_1$ (i.e., $|x|$)  match. 
        We assume also it to be \textit{proper} which means that \textit{wiring is such, that it does not create contextuality when acting on an NC behavior}: for any $P_{NC}\in \mathcal{N}$ behavior, and proper wiring $W(P_{NC}) \in \mathcal{N}$. 
        For examples and the definition of such wiring see \cite{Karol_Axiomatic} and \cite{Cabello_wirings} respectively.
    \item 
        \textbf{Partial trace.} To give an example: 
        \begin{equation}
            P(a,b \mid x,y) \mapsto \sum_a P(a,b \mid x=x_0,y). 
        \end{equation}
        Let us note that this operation is well defined if and only if there is consistency condition satisfied (aka non-signaling), so that the marginal behavior does not depend on the input $x_0$. 
\end{enumerate}

We now prove the following:
\begin{theorem}
    Let the behavior $P$ defined as $P \equiv P(a_1,\ldots, a_{d_{\mathrm{out}}} \mid x_1,\ldots, x_{d_{\mathrm{in}}})$ be such that its marginals are well defined $\sum_{a \neq a_i} P(a \mid x_j = x_j^0 \mbox{ for }\,i\neq j,\, x_i)$. 
    Then, for any composition of operations $1$-$4$ defined above, call it $\Lambda$, we have that:
    \begin{equation}
        RC_2(\Lambda(P))\leq RC_2(P).
    \end{equation}
\end{theorem}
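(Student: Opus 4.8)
The plan is to reduce the claim to a statement about the non-logarithmic rank and then to a statement about the four elementary operations separately. Since $\log_2$ is monotone increasing, it suffices to prove $RC(\Lambda(P)) \le RC(P)$. Writing $\Lambda = \mathcal{O}_k \circ \cdots \circ \mathcal{O}_1$ as a composition of the elementary operations $1$--$4$, and noting that each intermediate object is again a legitimate behavior (the hypothesis that all marginals are well defined guarantees this), it is enough to show that \emph{each} single operation $\mathcal{O}$ satisfies $RC(\mathcal{O}(P)) \le RC(P)$; the bound for $\Lambda$ then follows by induction on the length of the composition.

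For a single operation $\mathcal{O}$ my template is the following. Let $S = \{N_1,\dots,N_m\} \subset \mathcal{N}$ be an optimal simulating set for $P$, so $m = RC(P)$. Push it through the operation, setting $S' \coloneqq \{\mathcal{O}(N_1),\dots,\mathcal{O}(N_m)\}$. Because a map cannot increase cardinality, $|S'| \le m$. It then remains to verify two facts: (a) $S' \subset \mathcal{N}$, i.e.\ each $\mathcal{O}(N_i)$ is still non-contextual; and (b) $S'$ simulates $\mathcal{O}(P)$, i.e.\ for every context $Z$ of $\mathcal{O}(P)$ some member of $S'$ reproduces $\mathcal{O}(P)(\cdot\mid Z)$. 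Granting (a) and (b) we obtain $RC(\mathcal{O}(P)) \le |S'| \le RC(P)$, as desired.

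For operations $1$, $2$ and $4$ the template goes through directly. Tensoring with a fixed $P_{NC}$ (op.\ $1$), fixing an input to a value (op.\ $2$), and marginalizing over an output (op.\ $4$) all preserve the existence of a non-contextual hidden-variable model --- concretely they act as a tensor product, a restriction, and a marginalization on the underlying global joint distribution $J$ --- so (a) holds; op.\ $4$ is precisely where the well-definedness (non-signaling) hypothesis is used to make the marginal independent of $x_0$. For (b) the crucial structural point is that in each of these three operations every context $Z$ of the output behavior comes from a \emph{single} context of $P$, so the behavior $N_i$ that agreed with $P$ on that context, after the operation, agrees with $\mathcal{O}(P)$ on $Z$.

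The main obstacle is operation $3$, the proper simple wiring. Non-contextuality of $\mathcal{O}(N_i)$ (part (a)) is handed to us by the very definition of \emph{proper}: a proper wiring maps $\mathcal{N}$ into $\mathcal{N}$. The difficulty lies in (b). The simple wiring $P_1\otimes P_2 \mapsto \sum_c P_1(a\mid c)\otimes P_2(c\mid y)$ feeds the output $c$ of the second component as the input of the first, and the sum over $c$ means that a single output context (fixed $y$) depends on $P_1$ evaluated at \emph{every} wire value $c$ at once. Hence reproducing $\mathcal{O}(P)$ on one context requires a non-contextual behavior agreeing with $P$ on the whole family of contexts $\{(c,y)\}_{c}$ simultaneously, whereas the definition of $RC$ only guarantees, context by context, \emph{some} agreeing member of $S$ --- possibly a different one for each $c$. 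The work therefore is to show that a single $N_i$ can be chosen to match $P$ on all the contexts feeding a given wired context; here I would exploit the rigidity coming from the fact that a non-contextual behavior carries one \emph{global} joint distribution over all observables, so that agreement on the relevant mutually commuting contexts propagates, and then push this joint distribution through the classical channel implementing the wiring to obtain the required member of $S'$.
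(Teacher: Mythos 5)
Your reduction to the non-logarithmic rank, the induction over the length of the composition, and the treatment of operations $1$, $2$ and $4$ coincide with the paper's argument: in all three cases the paper likewise pushes an optimal simulating set $S=\{N_1,\dots,N_m\}$ through the operation and uses closure of $\mathcal{N}$ under tensor product, conditioning and partial trace to conclude $RC(\mathcal{O}(P))\le |S'|\le RC(P)$. (The paper treats operation $2$ as post-selection, $NC_i^{a_0|x_0}(b\mid y)=NC_i(a_0,b\mid x_0,y)/NC_i(a_0\mid x_0)$, but the mechanism is exactly the one you describe.)

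The wiring case is where your proposal stops short of a proof, and the obstacle you flag there is real: a wired context $y$ mixes $P_1(\cdot\mid b')$ over all wire values $b'$, while the simulation of $P_1$ only guarantees a possibly different $N_{i(b')}$ for each $b'$, so pushing the product set $\{N_i\otimes M_j\}$ through the wiring need not reproduce $W(P_1\otimes P_2)$ at $y$. Your proposed repair --- invoking the single global joint distribution carried by one non-contextual $N_i$ --- does not close this gap: that joint distribution only controls $N_i$'s own marginals on the various contexts $b'$, and nothing forces those marginals to agree with $P_1$ on contexts that $N_i$ was not selected to cover (if a single $N_i$ covered every context, $P_1$ would have rank $1$). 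The paper handles this point differently: it fixes $j=j(y)$ so that $Q_{N_{j(y)}}(\cdot\mid y)=P_2(\cdot\mid y)$, lets the first index float with the wire value, rewrites the target context as the patchwork $\sum_{b'}P_{N_{i=b'}}(a\mid b')\,Q_{N_{j=y}}(b'\mid y)$, and appeals to properness of the wiring for the non-contextuality of the resulting $K_P\times K_Q$ candidate behaviors. Whatever one makes of that step, your proposal as written contains no argument for the wiring case at all --- only a correct diagnosis of the difficulty and a sketch that, as it stands, would not resolve it.
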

\begin{proof}
    The proof of the above theorem follows from the definition of the measure $RC_2$ and the closure properties of the set $\mathcal{N}$:
    \begin{enumerate}
        \item $\mathcal{N}$ is closed under the tensor product,
        \item $\mathcal{N}$ is closed under proper simple wirings,
        \item $\mathcal{N}$ is closed under partial trace.
    \end{enumerate}

    First, that $\mathcal{N}$ is closed under the tensor product is obvious -- the independent combination of two probability distributions over deterministic behaviors remains a probabilistic combination of deterministic behaviors. 
    The fact that $\mathcal{N}$ is closed under proper simple wirings follows from closure under the tensor product, and some more detailed arguments showed below. 
    Partial trace is similarly a restriction to certain deterministic behaviors.

    Let's fix a behavior $C(a \mid x)$ and represent an optimal set $S=\{NC^*_i\}_{i=1}^K$ of non-contextual behaviors realizing $RC_2(C)$ behaviors (this set need not be uniquely defined). 
    Then a set that simulates the challenging behavior $C\otimes P_{NC}$ is the following one: $\{NC^*_i\otimes P_{NC}\}_{i=1}^K$. 
    It does the job, as an input to the new behavior is a pair $(x, x')$ of inputs to the two of them. 
    Since behavior $P_{NS}$ simulates itself, it can process any input $x'$ (i.e., output of it under measurement $x$ is compatible with that of $P_{NC}(a \mid x)$) by definition. 
    And since $\mathcal{N}$ is closed under tensor product, this remains a non-contextual behavior. 
    Therefore, the tensor product does not increase the $RC_2$ measure.
    
    To see those simple proper wirings do not increase the rank of contextuality, we argue as follows. 
    Consider two behaviors of compatible input and outputs dimensions such that simple wiring is possible.
    \begin{align}
    	P(a \mid x)=\{P_{N_i}\}_{i=1}^{K_P}, \label{eq:sim-1}\\
    	P(b \mid y)=\{Q_{N_j}\}_{j=1}^{K_Q}, \label{eq:sim-2}
    \end{align}
    then consider a $\otimes$-simulation of the form: $P(a \mid x)\otimes P(b \mid y) \sim \{P_{N_i}\otimes Q_{N_j}\}_{i,j=1}^{K_P,K_Q}$. 
    Let us now see what happens to the simulation under simple wiring of the form 
    \begin{equation}
        P_{N_i}\otimes Q_{N_j} \mapsto \sum_{b'}P_{N_i}(a \mid b')\otimes P_{Q_j}(b' \mid y).
    \end{equation}
    
    We want to show (i) that the RHS equals $\sum_{b'}P(a \mid b')\otimes P(b' \mid y)$ and (ii) that the behaviors induced by wiring form a valid simulation, i.e., all belong to the set of non-contextual behaviors. 
    To see (i) let us first note that by Eq.\ (\ref{eq:sim-2}) for any $y$ there exists $Q_{N_{j=y}}$ such that for any $b$ there is  $P(b \mid y) =Q_{N_{j=y}}(b \mid y)$ hence:
    \begin{equation}
        \sum_{b'}P(a \mid b')\otimes P(b' \mid y)=\sum_{b'}P(a \mid b')\otimes Q_{N_{j=y}}(b' \mid y).
    \end{equation}
    
    Now by Eq.\ (\ref{eq:sim-1}), for any $x$, e.g., $x=b'$ there exists $P_{N_{i=x}}$ such that for any $a$ $P(a \mid b') =P_{N_{i=x}}(a \mid b')$ thus, 
    \begin{equation}
        \mathop\forall_{a,b',y} P(a \mid b')\otimes P(b' \mid y) =
        P_{N_{i=b'}}(a \mid b')\otimes Q_{N_{j=y}}(b' \mid y),
    \end{equation}
    hence,
    \begin{equation}
        \begin{split}
            \mathop\forall_{a,y}& \sum_{b'} P(a \mid b')\otimes P(b' \mid y)\\
            &=\sum_{b'}  P_{N_{i=b'}}(a \mid b')\otimes Q_{N_{j=y}}(b' \mid y)\\
            &\equiv P'_{N}(a \mid y)_{(ij)}.
        \end{split}
    \end{equation}
    
    The set of non-contextual behaviors is closed under wirings, hence the behavior $P_{N_i}\otimes Q_{N_j}$ is non-contextual for any $i,j$.
    Since the wiring is proper, it sends non-contextual behaviors to non-contextual ones. 
    Because of that the behavior $P'_{N}(a \mid y)_{(ij)}$ is non-contextual.
    Since this holds for any $(i,j)$ which pairs is $K_P\times K_Q$ many, we have that 
    \begin{equation}
        \begin{split}
            W(a \mid y) &\equiv\sum_{b'} P(a \mid b')\otimes P(b' \mid y)\\
            &\sim \left\{P'_{N}(a \mid y)_{(ij)}\right\}_{i,j=1}^{K_P,K_Q}.
        \end{split}
    \end{equation}
    
    Since this is just one possible simulation, and the rank of contextuality is the infimum over cardinalities of the latter, we get that:
    \begin{equation}
        \begin{split}
            RC_2(W(a \mid y)) &\leq K_P\times K_Q\\
            &= RC_2(P(a \mid x)\otimes P(b \mid y)).
        \end{split}
    \end{equation}
    Finally, we will show that (partial) measurement of a behavior do not increase the log-rank.
    It is enough to show that
    \begin{equation}
        \begin{split}
            \mathop\forall_{\substack{a=a_0\\x=x_0}} RC & \left(\frac{P(a=a_0, b \mid x=x_0, y)}{P(a=a_0 \mid x=x_0)} \right)\\
            & \leq RC(P(a, b \mid x, y)).
        \end{split}
    \end{equation}
    From the definition of the rank of contextuality, we know that behavior on the RHS can be simulated by some set of non-contextual behaviors $S = \{ NC_i(a, b \mid x, y) \}_{i=1}^K$. Let 
    \begin{equation}
        NC_i^{a_0|x_0}(b \mid y) \coloneqq \frac{NC_i(a=a_0, b \mid x=x_0, y)}{NC_i(a=a_0 \mid x=x_0)}.
    \end{equation}
    We can now define the set 
    \begin{equation}
        S' \coloneqq \left\{ {NC_i^{a_0|x_0}(b \mid y)} \right\}_{i=1}^{K'},
    \end{equation}
    (where $K'\leq K$ as some marginals of $NC_i(a,b \mid x,y)$ given input $x_0$ and output $a_0$ can coincide).
    In this set every behavior $NC_i^{a_0|x_0}$ is result of measurement observable $x=x_0$ on behavior $NC_i$ that gives output $a=a_i$.
    Since all behaviors $NC_i$ were non-contextual, we have that all $NC_i^{a_0|x_0}$ behaviors are also non-contextual.
    We also know that $|S'| \leq |S|$.
    The behavior from LHS can be simulated by the set $S'$, although it is not necessarily true that the $S'$ is the smallest possible simulating set.
    We get, therefore, LHS $\leq$ RHS. 
    The analogous inequality is also true for $RC_2$, which follows from the monotonicity of the logarithm that completes the proof.
\end{proof}

\section{Additivity of Log-Rank}
\label{sec:aditivity}

In addition to the structure of a resource theory, the log-rank measure of contextuality admits additivity under the tensor product operation.
\begin{theorem}
    For any behaviors $P_i \in \mathcal{C}$, 
    \begin{equation}
        RC_2\left(\mathop\bigotimes_{i=1}^n P_i\right)=\sum_{i=1}^n RC_2(P_i).
    \end{equation}
\end{theorem}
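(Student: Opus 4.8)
The plan is to reduce the claim to the multiplicativity of the non-logarithmic rank, $RC(P_1\otimes P_2)=RC(P_1)\,RC(P_2)$, since $RC_2=\log_2 RC$ and $\log_2$ turns products into sums; a straightforward induction on $n$ then upgrades the two-factor statement to the stated $n$-fold identity. So I would fix $n=2$, write $k_1=RC(P_1)$ and $k_2=RC(P_2)$, and prove the two inequalities separately.

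For the upper bound $RC(P_1\otimes P_2)\le k_1 k_2$ I would take optimal simulating sets $S_1=\{N_i\}_{i=1}^{k_1}$ and $S_2=\{M_j\}_{j=1}^{k_2}$ for $P_1$ and $P_2$ and form the product family $\{N_i\otimes M_j\}$. Given any product input $(x,y)$, choose $N_i$ matching $P_1$ at $x$ and $M_j$ matching $P_2$ at $y$; then $(N_i\otimes M_j)(a,b\mid x,y)=N_i(a\mid x)M_j(b\mid y)=P_1(a\mid x)P_2(b\mid y)=(P_1\otimes P_2)(a,b\mid x,y)$. Since $\mathcal{N}$ is closed under tensor product (established in the previous theorem), each $N_i\otimes M_j$ is non-contextual, so this family is a valid simulation of cardinality $k_1k_2$, giving the bound. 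This reuses exactly the construction already employed for the subadditivity-type monotonicity above.

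The real work is the matching lower bound $RC(P_1\otimes P_2)\ge k_1 k_2$. Here I would start from an optimal simulating set $S$ for $P_1\otimes P_2$ and try to manufacture simulations of the individual factors by partial trace, which keeps us inside $\mathcal{N}$ (closure under partial trace, again from the previous theorem). Tracing out the second system sends each $N\in S$ to an NC behavior $N^{(1)}$ whose realizable set $A_N\subseteq X_1$ contains every $x$ for which some $(x,y)$ is covered by $N$, and symmetrically one obtains $B_N\subseteq X_2$. Using non-signaling of NC behaviors one checks that $N^{(1)}$ agrees with $P_1$ on $A_N$ and that the coverage of $N$ sits inside the rectangle $A_N\times B_N$. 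Thus $\{A_N\}_N$ and $\{B_N\}_N$ cover $X_1$ and $X_2$ by realizable sets, which already forces $|S|\ge\max(k_1,k_2)$.

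The hard part—and the step I expect to be the genuine obstacle—is promoting $\max(k_1,k_2)$ to the product $k_1k_2$. The difficulty is that many distinct simulators $N\in S$ may share the same projection $A_N$ (or $B_N$), so counting projections alone loses a factor. One must therefore rule out that correlations between the two subsystems inside a single NC behavior let the sub-rectangles $\mathrm{Cov}(N)\subseteq A_N\times B_N$ tile $X_1\times X_2$ more economically than a product partition. I would attempt this by fixing an optimal realizable partition of $X_1$ witnessing $k_1$ and analyzing, cell by cell, the ``slice'' of $S$ forced to cover that cell against all of $X_2$, hoping each slice independently consumes at least $k_2$ behaviors. It is precisely here that the argument must be scrutinized most carefully: everything hinges on showing that the simulation budget splits multiplicatively across slices and that no single correlated behavior can be reused to cover distinct cells at once, and I would want to verify explicitly that the realizable-set structure coming from genuine behaviors actually forbids such reuse before trusting the lower bound.
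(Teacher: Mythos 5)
Your reduction to multiplicativity of $RC$ and your upper bound $RC(P_1\otimes P_2)\le k_1k_2$ are exactly the paper's argument. But the lower bound is where the proof actually lives, and there your proposal stops at precisely the point you flag yourself: you never rule out that a \emph{correlated} non-contextual simulator $N\in S$, whose coverage set sits inside $A_N\times B_N$ but need not be determined independently by the two factors, could let $S$ tile $X_1\times X_2$ with fewer than $k_1k_2$ elements. Counting projections only gives $\max(k_1,k_2)$, as you note, and the cell-by-cell slicing you sketch does not by itself yield disjoint slices each of size $\ge k_2$ (a single $N$ whose realizable set $A_N$ meets two cells would be double-counted). So as written this is a genuine gap, not a verification detail.

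The idea that closes it in the paper is a small structural lemma you are missing: because the \emph{target} behavior is a product $P_1\otimes P_2$, any simulator $NC_i$ in an optimal cover can be replaced by $\mathrm{Tr}_2(NC_i)\otimes\mathrm{Tr}_1(NC_i)$ without losing coverage --- at every input pair $(x,y)$ that $NC_i$ covers, it equals $P_1(a\mid x)P_2(b\mid y)$, so its marginals there are exactly $P_1(\cdot\mid x)$ and $P_2(\cdot\mid y)$, and the product of marginals still matches. Hence one may assume WLOG that every simulator is of the product form $N_i^{(1)}\otimes N_i^{(2)}$, which kills the correlated-reuse scenario outright. After that the counting goes through by the dual of your slicing: the inputs $y$ split into at least $K_2$ classes requiring distinct second factors $\hat N(b\mid y)$; for each fixed representative $y_0$ the indices $j$ whose second factor equals the corresponding $\hat N$ form a set $\mathcal J$ whose first factors $\{N_j^{(1)}\}_{j\in\mathcal J}$ must simulate all of $P_1$, so $|\mathcal J|\ge K_1$; and distinct classes give disjoint index sets because they are labelled by distinct second factors. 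That yields $K\ge K_1K_2$. Your instinct about where the danger lies was right; the productization lemma is the step you need to add before the slice argument can be trusted.
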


Before the proof, we will present a short lemma that follows from the definition of $RC_2$:
\begin{lemma}
	Let $\{NC_{i}(a, b \mid x,y)\}_{i=1}^{K}$ is such that $K=RC_2(P(ab \mid xy))$ and $P(ab \mid xy)=P_{1} (a \mid x) \otimes P_{2}(b \mid y)$ where $P_{1}$ and $P_{2}$ are two arbitrary behaviors. 
	Then we have that the family of non-contextual behaviors
	\begin{equation}
	    \{\mathrm{Tr}_{2}(NC_{i}(a, b \mid x,y))\otimes \mathrm{Tr}_{1}(NC_{i}(a, b \mid x,y))\}_{i=1}^{K}
	\end{equation}
	is a non-contextual cover of $P(ab \mid xy)=P_{1} (a \mid x) \otimes P_{2}(b \mid y)$. 
\end{lemma}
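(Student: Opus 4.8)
The plan is to verify directly the two defining requirements of a non-contextual cover from Eq.\ (\ref{def:mem-cost}): that every member of the proposed family lies in $\mathcal{N}$, and that for each input (context) $(x,y)$ some member reproduces $P(a,b \mid x,y)$. The family has at most $K$ elements, so establishing these two points proves the statement. The one idea that makes everything work is elementary: marginalizing a product distribution returns its factors, so the single index $i$ that covers a given context in the given cover $\{NC_i\}_{i=1}^K$ will still cover it after $NC_i$ is replaced by the product of its two marginals.

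First I would note that each $\mathrm{Tr}_2(NC_i)\otimes \mathrm{Tr}_1(NC_i)$ is non-contextual. Since $NC_i\in\mathcal{N}\subset\mathcal{P}$, the consistency condition of Eq.\ (\ref{consistencypairwise}) guarantees that the marginals $\mathrm{Tr}_2(NC_i)(a \mid x)$ and $\mathrm{Tr}_1(NC_i)(b \mid y)$ are well defined, i.e.\ independent of the traced-out input; partial tracing sends non-contextual behaviors to non-contextual ones (closure of $\mathcal{N}$ under partial trace, established in the proof of monotonicity), and $\mathcal{N}$ is closed under the tensor product, so the product of the two marginals is again in $\mathcal{N}$.

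Next I would check the covering property one context at a time. Fix $(x,y)$ and choose the index $i$ for which the given cover satisfies $NC_i(a,b \mid x,y)=P_1(a \mid x)\,P_2(b \mid y)$ for all $a,b$; such an $i$ exists because $\{NC_i\}_{i=1}^K$ covers $P_1\otimes P_2$. Summing over $b$ and using $\sum_b P_2(b \mid y)=1$ yields $\mathrm{Tr}_2(NC_i)(a \mid x)=P_1(a \mid x)$, and symmetrically $\mathrm{Tr}_1(NC_i)(b \mid y)=P_2(b \mid y)$. Multiplying the two marginals recovers $P_1(a \mid x)P_2(b \mid y)=P(a,b \mid x,y)$, so this member of the family reproduces $P$ on $(x,y)$. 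As $(x,y)$ was arbitrary, the family covers every context and is therefore a non-contextual cover of $P_1\otimes P_2$.

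I do not expect a genuine obstacle here: the argument uses only the covering property of $\{NC_i\}$ (not its minimality) together with closure of $\mathcal{N}$ under partial trace and tensor product. The sole point that needs care is the well-definedness of the partial traces, which is precisely the consistency/non-signaling assumption already imposed in the additivity theorem; once it is granted, the same index $i$ covers $(x,y)$ both for $NC_i$ and for $\mathrm{Tr}_2(NC_i)\otimes \mathrm{Tr}_1(NC_i)$, which is the entire content of the lemma. As a useful by-product for the additivity proof, the same computation shows that $\{\mathrm{Tr}_2(NC_i)\}_i$ covers $P_1$ and $\{\mathrm{Tr}_1(NC_i)\}_i$ covers $P_2$.
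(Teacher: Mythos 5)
Your proposal is correct and follows essentially the same route as the paper: pick, for each context $(x,y)$, the index $i$ whose $NC_i$ reproduces $P_1\otimes P_2$ there, observe that its marginals are then exactly $P_1(a\mid x)$ and $P_2(b\mid y)$, and conclude that the product of the traces covers that context. Your write-up merely makes explicit two points the paper leaves implicit — the well-definedness of the marginals via the consistency condition and the closure of $\mathcal{N}$ under partial trace and tensor product — which is a welcome but not substantively different addition.
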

\begin{proof}
	By assumption, for all $x,y$ context pairs, there exists an $i$ such that $ NC_{i}(a,b \mid x,y) = P_{1} (a \mid x) \otimes P_{2}(b \mid y)$. 
	Then for that input pair, by definition:
	\begin{align}
	    \mathrm{Tr}_{1} NC_{i}(a,b \mid x,y) &= P_{2}(b \mid y),\\
	    \mathrm{Tr}_{2} NC_{i}(a,b \mid x,y) &=P_{1}(a \mid x).
	\end{align}
	It demonstrates that for all $x,y$, the product of the traces suffices for the simulation. 
\end{proof}
\begin{proof} 
	The proof of the theorem involves first proving the basic case of two contextual behaviors $P_{1}\otimes P_{2}$. 
	First, given $RC_2(P_{1})= \log_{2} K_{1}$ and $RC_2(P_{2})=\log_{2}K_{2}$, a naive simulation is simply to take the tensor product of all non-contextual behaviors in each minimal simulation.
	So it is clear that $RC_2(P_{1} \otimes P_{2}) \leq \log_{2}(K_{1}\cdot  K_{2}) = \log_2 K_1 + \log_2 K_2 =RC_2(P_{1}) + RC_2(P_{2})$. 
	
	By the Lemma, a given simulation of $P_1 \otimes P_2$ can be written as $\{N_{i}(a \mid x) \otimes N_{i}(b \mid y)\}_{i=1}^{K}$. 
	By definition $\{\mathrm{Tr}_{1}N_{i}(a \mid x) \otimes N_{i}(b \mid y)\} \sim P_{2}$, so ${K} \geq K_{2}$. 
	
	The set of inputs $y$ to $P_2$ are then divided into at least $K_2$ subsets such that each subset of inputs requires a different (one of at least $K_2$) non-contextual behavior $N_i(b \mid y)$ according to definition $RC(P_2)$. 
	Let us call the set of representants of these subsets as $\hat{\mathcal{Y}}$. We have then $|\hat{\mathcal{Y}}| \geq K_2$.
	
	Let us then fix $y_{0} \in \hat{\mathcal{Y}}$ arbitrarily, such that the simulation of $P_2$ requires a non-contextual behavior $\hat{N}(b \mid y)$ when measurement $y_0$ is chosen and there exists some $i$ such that the marginal over system $1$ of $N_i(a \mid x)\otimes N_i(b \mid y)$ equals $\hat{N}(b \mid y)$ (it follows from the above that such $i$ exists). 
	We then ask how many behaviors of the form $N_j(a \mid x)\otimes N_j(b \mid y)$ has $N_j(b \mid y)=\hat{N}(b \mid y)$. 
	Let us denote the set of indices $j$ satisfying this property $\mathcal{J}$. 
	To answer this question observe that the family $\{ N_{j}(a \mid x)\otimes\sum_{b}  N_{j}(b \mid y_{0})\}_{j \in \mathcal{J}}$ simulates $P_{1}$. 
	Indeed, the set which simulates $P_1\otimes P_2$ must be ready for every pair of inputs $(x,y_0)$, and partial trace over system $2$ of non-contextual behavior which covers input $(x,y_0)$ is also non-contextual and covers input $x$.
	We will argue now that $|\mathcal{J}| \geq K_1$.
	Otherwise we could find a set of non-contextual behaviors of the form $\{ N_{j}(a \mid x)\otimes\sum_b N_j(b \mid y_0)\}_{j\in \mathcal{J}}$ that simulates $P_1$ with less number of elements than $K_1$. 
	This, however, would contradict the fact that $RC(P_1)=K_1$. 
    Since $y_0 \in \hat{\mathcal{Y}}$ was arbitrary, the same argument goes for any $y \in \hat{\mathcal{Y}}$, of which there are $K_2$ many. 
    This implies that $K \geq K_1 \times K_2$, which completes the proof in the two-behavior case.	

    The general case follows from induction and the fact that the tensor product of contextual behaviors is itself contextual.
\end{proof}

It turns out that the rank is not convex.
However, it is upper bounded by the maximal value of the ranks of the mixed behaviors, which we state below.
\begin{remark}
    Let $P_1$ and $P_2$ are two behaviors defined on the same graph $G$. 
    Then for any $\lambda \in \{ 0, 1 \} $ the mixture of probabilities of those behavior $P \coloneqq \lambda P_1 + (1 - \lambda) P_2$ fulfils
    \begin{equation}
        RC_2(P) \leq \max \{ RC_2(P_1), RC_2(P_2) \}.
    \end{equation}
\end{remark}

\section{Constructions of behaviors with arbitrary Log-Rank}
\label{sec:unboundedness}

In this section, we will show how to construct a behavior that has an arbitrary log-rank.
We will restrict our considerations here only to the case of standard graphs. 
It means that we consider the case when, in a hypergraph, all the edges contain only two vertices, so the hypergraph becomes the usual (undirected) graph.
Before that, we will present some toy examples of calculating log-rank in simple cases.

We will start with an example of a 3-cycle behavior and then extend it to a cycle of arbitrary length.
The 3-cycle behavior consists of 3 observables $\{ A, B, C \}$ which has output $O = \{ 0, 1 \}$ and 3 contexts $C_1 = \{ A, B \}$, $C_2 = \{ B, C \}$, and $C_3 = \{ C, A \}$, and each of conditional probability is anticorrelation
\begin{equation}
    \label{eq:probability3cycle}
    \begin{cases}
        P(01 \mid C_i) = P(10 \mid C_i) = \frac{1}{2},\\
        P(00 \mid C_i) = P(11 \mid C_i) = 0,
    \end{cases} 
\end{equation}
for every $C_i$.
The 3-cycle behavior is contextual since no single non-contextual behavior reproduces all conditional probabilities. 
On the other hand $\{ P(o \mid C_i) \}_{i=1,2}$ and $\{ P(o \mid C_3) \}$ is extendable respectively.
For example $\{ P(o \mid C_i) \}_{i=1,2}$  can be reproduced by a non-contextual behavior $N_1$:
\begin{equation}
    \label{repair1}
    P_{N_1}(010 \mid ABC) = P_{N_1}(101 \mid ABC) = \frac{1}{2}
\end{equation}
and $\{ P(o \mid C_3) \}$ can be reproduced by a non-contextual behavior $N_2$:
\begin{equation}
    \label{repair2}
    P_{N_2}(001 \mid ABC) = P_{N_2}(100 \mid ABC) = \frac{1}{2}.
\end{equation}
These two non-contextual behaviors $N_1, N_2$ simulate 3-cycle behavior. 
Therefore $RC(3\text{-cycle}) \leq 2$.

We can easily extend this example to the case of arbitrary $k$-cycle behavior.
In the case when $k$ is odd, we assign anticorrelations to all of the edges the same way as in $3$-cycle one (see Eq.\ (\ref{eq:probability3cycle})). 
On the other hand, if $k$ is even, this will not work. 
The simplest solution is to assign correlation (identity operation) to one edge and assign anticorrelations to all remaining $k-1$ edges. 
The probability distribution would be then
\begin{equation}
    \begin{cases}
        P(01 \mid C_i) = P(10 \mid C_i) &= \frac{1}{2} \quad \text{ for } i > 1,\\
        P(00 \mid C_i) = P(11 \mid C_i) &= 0 \quad \text{ for } i > 1,\\
        P(00 \mid C_1) = P(11 \mid C_1) &= \frac{1}{2},\\
        P(01 \mid C_1) = P(10 \mid C_1) &= 0.
    \end{cases}
\end{equation}
Therefore $RC(k\text{-cycle}) \leq 2$.

\begin{figure}[htb]
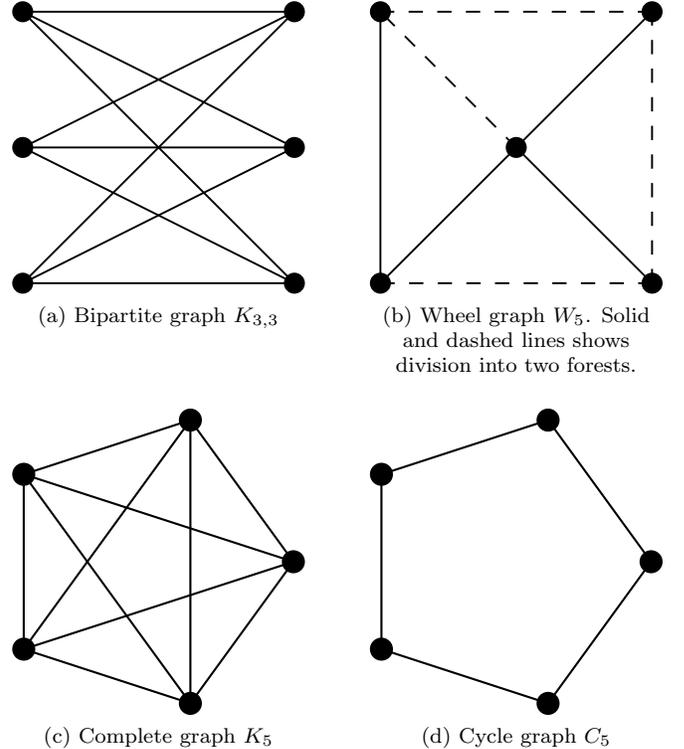

    \subfloat[Bipartite graph $K_{3,3}$\label{fig:graph1}]{%
        \includestandalone[width=0.45\columnwidth]{figures/graph1}%
    }
    \hfill
    \subfloat[Wheel graph $W_5$\label{fig:graph2}. 
        Solid and dashed lines shows division into two forests.]{%
        \includestandalone[width=0.45\columnwidth]{figures/graph2}%
    }
    \hfill
    \subfloat[Complete graph $K_5$\label{fig:graph3}]{%
        \includestandalone[width=0.45\columnwidth]{figures/graph3}%
    }
    \hfill
    \subfloat[Cycle graph $C_5$\label{fig:graph4}]{%
        \includestandalone[width=0.45\columnwidth]{figures/graph4}%
    }
    \caption{
        \label{fig:graphs}
        Examples of different graph types used in the text.
    }
\end{figure}

Now, we will present some definitions and lemmas needed to show how to construct a behavior with an arbitrarily high value of log-rank.
\begin{definition}[Arboricity]
    For a graph $G$, the arboricity $\Upsilon(G)$ is the minimum number of spanning forests (edge-disjoint acyclic subgraphs) whose union is $G$.
\end{definition}
\begin{definition}[Cycle graph]
    A cycle graph $C_{n}$ is a graph that has $n$ vertices and $n$ edges. 
    Such a graph consists of a single cycle that contains all vertices.
    See Figure \ref{fig:graph4} for an example of cycle graph.
\end{definition}
\begin{definition}[Complete graph]
    A complete graph $K_{n}$ is a graph that have $n$ vertices and $n(n-1)/2$ vertices. 
    In a complete graph, every two vertices are connected by an edge.
    See Figure \ref{fig:graph3} for an example of complete graph.
\end{definition}
\begin{lemma}
    Let $K_n$ be complete graph. 
    Then the arboricity of the graph equals $\Upsilon(K_n) = \lceil n/2 \rceil$.
\end{lemma}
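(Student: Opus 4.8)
The plan is to pin down $\Upsilon(K_n)$ by sandwiching it between matching lower and upper bounds, both driven by the edge count $|E(K_n)| = \binom{n}{2} = n(n-1)/2$ together with the fact that any forest on $n$ vertices carries at most $n-1$ edges. The lower bound is a one-line counting argument; the upper bound, where an explicit decomposition into $\lceil n/2 \rceil$ forests must be produced, is the real work.

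For the lower bound I would suppose $K_n$ is partitioned into $t$ edge-disjoint forests $F_1, \dots, F_t$. Each $F_j$ is acyclic on at most $n$ vertices, so $|E(F_j)| \le n-1$. Since the $F_j$ cover every edge exactly once, $\binom{n}{2} = \sum_j |E(F_j)| \le t(n-1)$, whence $t \ge \frac{n(n-1)/2}{n-1} = n/2$. As $t$ is a positive integer this forces $t \ge \lceil n/2 \rceil$, and therefore $\Upsilon(K_n) \ge \lceil n/2 \rceil$.

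For the upper bound I would exhibit a decomposition into exactly $\lceil n/2 \rceil$ forests, splitting by parity. When $n = 2m$ is even, I would invoke the classical fact that $K_{2m}$ decomposes into $m$ edge-disjoint Hamiltonian paths; each such path is a spanning tree, hence a forest, giving $\Upsilon(K_{2m}) \le m = \lceil n/2 \rceil$. When $n = 2m+1$ is odd, I would start from the classical decomposition of $K_{2m+1}$ into $m$ edge-disjoint Hamiltonian cycles. A cycle is not a forest, but deleting one edge from each cycle turns it into a spanning path; the point is to work with a Walecki-type decomposition in which every cycle passes through one distinguished vertex and to delete one edge incident to that vertex per cycle, so that the $m$ deleted edges together form a star $K_{1,m}$, which is itself a forest. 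This yields $m$ path-forests plus one star-forest, i.e. $m+1 = \lceil n/2\rceil$ forests.

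The main obstacle is the odd case, where one must justify both the existence of the Hamiltonian-cycle decomposition and the acyclicity of the collected leftover edges. A cleaner route that bypasses these explicit constructions is to appeal to the Nash--Williams arboricity formula $\Upsilon(G) = \max_{H \subseteq G,\, |V(H)| \ge 2} \lceil |E(H)| / (|V(H)|-1)\rceil$: any subgraph $H$ on $p$ vertices satisfies $|E(H)|/(|V(H)|-1) \le \binom{p}{2}/(p-1) = p/2 \le n/2$, while $H = K_n$ attains $n/2$, so the formula collapses immediately to $\Upsilon(K_n) = \lceil n/2\rceil$. I would likely present the elementary counting lower bound together with the explicit even/odd construction, keeping Nash--Williams in reserve as the conceptual shortcut.
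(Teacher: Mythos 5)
Your proof is correct, but note that the paper does not actually prove this lemma: it is stated as a known fact and deferred to Harary's \emph{Graph Theory} \cite{graph}, so there is no in-paper argument to compare against. Your self-contained argument is the standard one and is sound in all its parts. The counting lower bound $t \ge \binom{n}{2}/(n-1) = n/2$ is airtight. For the upper bound, the even case via the decomposition of $K_{2m}$ into $m$ Hamiltonian paths is classical, and in the odd case you correctly identify the one delicate point: in a Walecki decomposition of $K_{2m+1}$ into $m$ Hamiltonian cycles, the $2m$ edges at a distinguished vertex $v$ are partitioned two per cycle, so choosing one deleted edge per cycle at $v$ yields $m$ distinct edges to distinct neighbours, i.e.\ a star $K_{1,m}$, which is acyclic. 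The Nash--Williams shortcut you keep in reserve is also applied correctly (every subgraph on $p$ vertices has density at most $p/2 \le n/2$, attained by $K_n$ itself, and the ceiling passes through the maximum). Either route would serve as a legitimate replacement for the paper's bare citation; the explicit decomposition has the added virtue of exhibiting the forests, which is in the spirit of the paper's use of arboricity to build simulating non-contextual behaviors.
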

\begin{definition}[Complete bipartite graph]
    A complete bipartite graph $K_{m,n}$ is a graph that have $m+n$ vertices that are partitioned into two sets $V_1$ and $V_2$ of size $m$ and $n$ respectively. 
    The graph have $mn$ edges distributed in such a way that two vertices $v_1, v_2 $ are connected by an edge if and only if $v_1 \in V_1$ and $v_2 \in V_2$.
    See Figure \ref{fig:graph1} for an example of complete bipartiate graph.
\end{definition}
\begin{lemma}
    \label{lem:bip}
    Let $K_{m,n}$ be a complete bipartite graph; that is a graph in which we can partition vertices into two sets of vertices $V_1$. 
    Then the arboricity of the graph equals 
    \begin{equation}
        \Upsilon(K_{m,n}) = \left\lceil \frac{mn}{m+n-1} \right\rceil.
    \end{equation}
\end{lemma}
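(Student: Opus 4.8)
The plan is to reduce the statement to the Nash--Williams arboricity formula, which asserts that for any graph $G$ with at least one edge,
\begin{equation}
    \Upsilon(G) = \max_{\substack{H \subseteq G \\ |V(H)| \geq 2}} \left\lceil \frac{|E(H)|}{|V(H)| - 1} \right\rceil,
\end{equation}
the maximum ranging over all subgraphs $H$ on at least two vertices. Granting this formula (which I would cite from the graph-theory reference rather than reprove), the argument splits into a lower and an upper bound on $\Upsilon(K_{m,n})$.

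For the lower bound I would evaluate the right-hand side at $H = K_{m,n}$ itself, or equivalently argue directly: since $K_{m,n}$ has $mn$ edges and $m+n$ vertices, and any spanning forest is acyclic and hence contains at most $m+n-1$ edges, one needs at least $\lceil mn/(m+n-1)\rceil$ edge-disjoint forests to cover all edges. Thus $\Upsilon(K_{m,n}) \geq \lceil mn/(m+n-1)\rceil$. This direction is elementary edge-counting and does not rely on the deep content of Nash--Williams.

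For the upper bound I would show that $H = K_{m,n}$ actually attains the maximum, i.e.\ that every subgraph satisfies $|E(H)|/(|V(H)|-1) \le mn/(m+n-1)$. Adding isolated vertices only decreases the ratio, so it suffices to treat subgraphs with no isolated vertices; such an $H$ spans some $a \le m$ vertices of $V_1$ and $b \le n$ vertices of $V_2$ and, since every edge runs between the two sides, has at most $ab$ edges. Hence it is enough to prove that $g(a,b) \coloneqq ab/(a+b-1)$ is maximized at $(a,b)=(m,n)$ on the box $1 \le a \le m$, $1 \le b \le n$. The key computation is $\partial_a g = b(b-1)/(a+b-1)^2 \ge 0$ and, by symmetry, $\partial_b g = a(a-1)/(a+b-1)^2 \ge 0$, so $g$ is non-decreasing in each argument and therefore maximal at $(m,n)$. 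Since the ceiling function is monotone, $\lceil g(a,b)\rceil \le \lceil g(m,n)\rceil = \lceil mn/(m+n-1)\rceil$ for every admissible subgraph, and Nash--Williams then gives $\Upsilon(K_{m,n}) \le \lceil mn/(m+n-1)\rceil$.

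Combining the two bounds yields the claimed equality. The only genuine obstacle is the upper bound, and within it the monotonicity of $g$ together with the reduction to induced complete bipartite subgraphs $K_{a,b}$; everything else is bookkeeping, with the heavy lifting outsourced to the Nash--Williams theorem.
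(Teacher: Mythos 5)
Your proof is correct, but note that the paper itself does not prove Lemma \ref{lem:bip} at all: it states the arboricity of $K_{m,n}$ as a known fact and points the reader to Harary's \emph{Graph Theory} for background, so there is no in-paper argument to compare against. What you have written is the standard textbook derivation, and the details check out. The lower bound is indeed elementary edge-counting ($mn$ edges, each forest contributing at most $m+n-1$ of them). For the upper bound, the reduction to subgraphs without isolated vertices is legitimate (deleting an isolated vertex can only increase the ratio $|E(H)|/(|V(H)|-1)$), such a subgraph with $a$ vertices on one side and $b$ on the other has at most $ab$ edges, and your computation $\partial_a\bigl(ab/(a+b-1)\bigr) = b(b-1)/(a+b-1)^2 \ge 0$ is correct, so $g(a,b)=ab/(a+b-1)$ is maximized at $(m,n)$ over the box $1\le a\le m$, $1\le b\le n$; monotonicity of the ceiling then closes the argument via Nash--Williams. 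The one thing to flag is that your argument outsources all the depth to the Nash--Williams formula, which is a substantial theorem in its own right -- but since you explicitly cite it rather than claim to reprove it, that is exactly the level of rigor the paper itself adopts (and then some).
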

\begin{definition}[Wheel graph]
    A wheel graph $K_{n}$ is a graph that has $n$ vertices and $2(n-1)$ edges. 
    The graph is formed by connecting a single universal vertex to all vertices of a cycle $C_{n-1}$.
    See Figure \ref{fig:graph2} for an example of wheel graph.
\end{definition}
For more information about arboricity, see, for example, Graph Theory book by Frank Harary \cite{graph}.

From the toy example, we know that a behavior represented by an acyclic graph is non-contextual. 
Additionally, the one represented by a cycle can have the rank of contextuality 0 or 1 depending on the probabilities. 
We will now show that our measure of contextuality is upper bounded by the arboricity and how, for every graph, construct a contextual behavior that has the rank of contextuality equal to its arboricity.
It is easy to see that rank of contextuality can not be greater than arboricity since we can divide the graph $G$ that represents the behavior into $\Upsilon(G)$ forests that are always non-contextual.
It is, however, more complicated to show, that for every graph, we can assign probabilities in such a way that we obtain the maximal rank of contextuality.
We should emphasize that the simple method we used for cycles will not always work since many different cycles in the graph can share the same edge.
For example, in the wheel graph $W_5$ (See Figure \ref{fig:graph2}), it is impossible to assign correlations and anticorrelations to the edges in such a way that there is a contradiction for every cycle. 
Therefore, have to use more elaborate construction.
\begin{definition}[color]
    For any parameter $m \in \mathbb{N}$ let us define set of permutations $\{\tau_o\}_{i=1}^m$ acting on the set $V = \{ 0, 1, \ldots, 2^m-1 \}$ in the following way
    \begin{equation}
        \tau_i(v) \coloneqq v + (-1)^{\left(\left\lfloor \frac{v}{2^{i-1}} \right\rfloor \mod 2\right)} 2^{i-1}.
    \end{equation}
    We will call such permutations colors.
\end{definition}
In Figure \ref{fig:colors} we present an example of colors.

\begin{figure}[htbp]
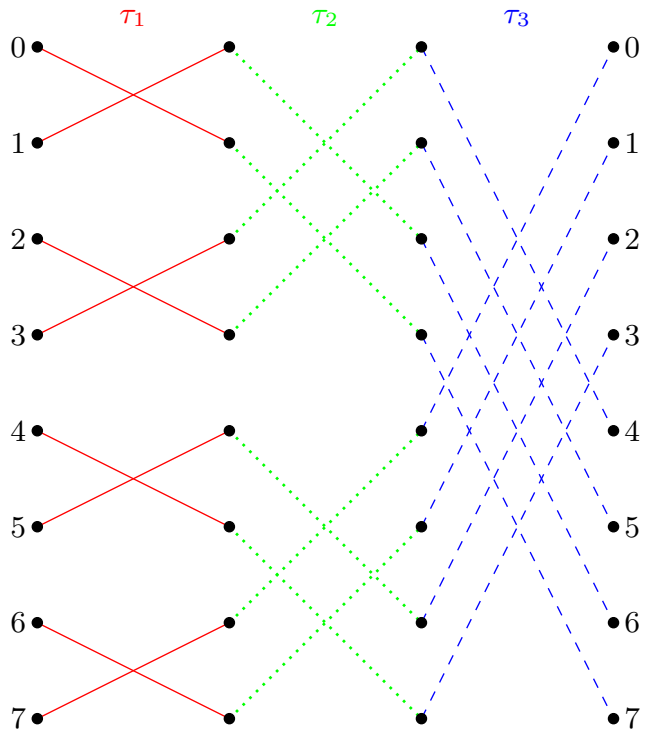

    %\fbox{
    \includestandalone[width=\columnwidth]{figures/colors}
    %}
    \caption{
        \label{fig:colors}
        Example of the set of three permutation (colors) acting on eight numbers.
    }
\end{figure}

Although the above definition may appear complex, the reader should easily understand it in terms of bit operations and binary representations. 
For every number $v \in V$ we can assign its binary representation $v = b_m b_{m-1} \ldots b_i \ldots b_2 b_1$ where $b_i \in \{0,1\}$. 
Then the permutation $\tau_i$ is just bit flip on i-th position of binary representation. 
We can formally write it as
\begin{equation}
    \begin{split}
        \tau_i(v) &= \tau_i(b_m b_{m-1} \ldots b_i \ldots b_2 b_1)\\
        &= b_m b_{m-1} \ldots \bar{b_i} \ldots, b_2 b_1
    \end{split}
\end{equation}
where $\bar{b_i}$ denotes bit flip.

Note that our definition of colors is different from the one presented in Eq.\ (3) and Eq.\ (4) of the paper by Rosicka et al.\ \cite{Rosicka2016}). 
Although, we will show that our colors also fulfill properties P1 and P2 from \cite{Rosicka2016}.
\begin{itemize}
    \item (P1) Each permutation is symmetric with respect to the exchange of players, i.e., the permutations are their own inverse.
    \item (P2) Every pair $(v, \tau(v))$ appears exactly once in the set of permutations (in particular, each permutation assigns a different $\tau(v)$ for each given $v \in V$).
\end{itemize}

It is easy to see that symmetry comes from the fact that using the same permutation twice (which means flipping the same bit twice) gives us identity. 
The second property is also obvious since any number of different permutations (flipping two different bits of the same number) always give different results. 
\begin{lemma}
    \label{lem:colors}
    The permutation defined by the composition of any number of different colors does not have fixed points.
\end{lemma}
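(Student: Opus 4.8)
The plan is to exploit the bit-flip interpretation of the colors already established above, which identifies each $\tau_i$ with the operation that flips the $i$-th bit of the binary representation, equivalently the bitwise XOR against the single-bit mask $2^{i-1}$. Working in this language, the composition of distinct colors collapses to a single XOR against a fixed mask, and since XOR against a nonzero mask can never return its input, the absence of fixed points will follow immediately.

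First I would fix a composition $\tau = \tau_{i_k} \circ \cdots \circ \tau_{i_1}$ of $k \geq 1$ pairwise distinct colors, so that the indices $i_1, \ldots, i_k$ are all different. Because flipping the $i$-th bit leaves every other bit untouched, the single-bit flips $\tau_{i_a}$ and $\tau_{i_b}$ commute for $a \neq b$, and each acts only on its own bit position. Consequently, applying $\tau$ to any $v = b_m \ldots b_1$ flips precisely the bits in positions $i_1, \ldots, i_k$ and leaves the remaining bits fixed; in XOR notation this reads $\tau(v) = v \oplus M$ with $M = \sum_{a=1}^k 2^{i_a - 1}$. Since the indices are pairwise distinct, the summands $2^{i_a - 1}$ occupy distinct bit positions, so $M$ is a strictly positive integer, in particular $M \neq 0$. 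A fixed point of $\tau$ would be a $v$ satisfying $v \oplus M = v$, equivalently $M = 0$, contradicting $M \neq 0$; hence $\tau$ has no fixed point, which is the assertion.

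The argument is essentially bookkeeping, and the only step requiring a word of care is justifying that the composition of distinct colors flips each relevant bit \emph{exactly} once and touches no other bit, which is exactly the commutativity of single-bit flips on disjoint positions. I would also emphasize that the distinctness hypothesis is genuinely needed: a repeated color would flip a bit twice, cancelling its contribution to $M$, so without distinctness one could force $M = 0$ and thereby produce fixed points. Thus the conclusion rests squarely on the colors being different.
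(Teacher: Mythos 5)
Your proof is correct and follows essentially the same route as the paper's: both interpret each color as a single-bit flip, observe that a composition of distinct colors flips each of its (distinct) bit positions exactly once, and conclude that at least one bit of $v$ changes, so $v$ cannot be fixed. Your XOR-mask formalization ($\tau(v) = v \oplus M$ with $M \neq 0$) is just a slightly more explicit bookkeeping of the same argument.
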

\begin{proof}
    Let $\tau$ be composition of any number of different colors meaning 
    \begin{equation}
        \tau \coloneqq \tau_{i_1} \circ \tau_{i_2} \circ \cdots \circ \tau_{i_k}  
    \end{equation}
    where all $i_j$ are different numbers from 0 to $m$.
    We have to show that for all $v \in V$ we have $\tau(v) \neq v$.
    We know, from binary representation, that $\tau_i$ is bit flip on position $i$ then $\tau_i$ is just operation that flips bits on all positions $i_j$. 
    Since there is at least one permutation in $\tau$ and no permutation appears more than once, at least one bit is always different for any value $v$. 
    Therefore $\tau$ does not have fixed points.
\end{proof}
\begin{theorem}
    \label{thm:arb}
    For every graph $G$, there exists a behavior with the rank of contextuality equal to the arboricity of the graph.
\end{theorem}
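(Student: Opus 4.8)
The plan is to prove equality by squeezing a suitably chosen behavior $P$ on $G$ between the two bounds $RC(P)\le\Upsilon(G)$ and $RC(P)\ge\Upsilon(G)$. The upper bound was already observed before the statement: partitioning $G$ into $\Upsilon(G)$ edge-disjoint forests and simulating $P$ on each forest by a single non-contextual behavior gives $RC(P)\le\Upsilon(G)$ for \emph{any} behavior on $G$. The whole content of the theorem is therefore to exhibit one behavior whose rank is not \emph{smaller} than $\Upsilon(G)$, which is where the colors enter.

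\textbf{Construction.} I would give each vertex (observable) the outcome set $V=\{0,\dots,2^m-1\}$ with $m$ at least the circuit rank $|E|-|V|+c$ of $G$ ($c$ being the number of connected components), and attach to every edge $e=\{u,w\}$ a permutation $\sigma_e$ of $V$ realized as a composition of colors, i.e.\ a bit-flip pattern $\ell(e)\in\mathbb{F}_2^m$. Fixing a spanning forest and labelling its non-tree edges by independent basis patterns (identity on tree edges) makes the induced linear extension $\ell$ injective on the cycle space of $G$, so that the net pattern $\sum_{e\in C}\ell(e)$ around every cycle $C$ is nonzero. The behavior is then the color analogue of the anticorrelation used for cycles,
\begin{equation}
P\big((v,\sigma_e(v))\mid e\big)=\tfrac{1}{2^m}\quad\text{for all }v\in V,
\end{equation}
with probability zero on every non-matching pair; a one-line marginal check shows each single-vertex marginal is uniform, so $P$ lies in the consistency polytope.

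\textbf{Lower bound.} The crux is that if a non-contextual behavior $N=\sum_j p_j D_j$ reproduces $P$ on an edge $e=\{u,w\}$, then every deterministic component $D_j$ in its support must satisfy $D_j(w)=\sigma_e(D_j(u))$; otherwise $N$ would place weight on a pair forbidden by $P$. Hence, if the edge set covered by $N$ contained a cycle $v_1v_2\cdots v_kv_1$, composing these relations around the loop would force $D_j(v_1)$ to be a fixed point of $\sigma_{e_k}\circ\cdots\circ\sigma_{e_1}$. Since the colors are commuting involutions, this composition equals the bit-flip $\ell(C)\neq 0$, which by Lemma \ref{lem:colors} has no fixed point — a contradiction. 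Thus the edges covered by any single non-contextual behavior form a forest, and because a cover of $E(G)$ by forests uses at least $\Upsilon(G)$ of them, every simulating set has at least $\Upsilon(G)$ elements. Combined with the upper bound this yields $RC(P)=\Upsilon(G)$.

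The main obstacle is the construction step rather than the lower-bound argument: one must choose the edge colors so that \emph{every} cycle receives a fixed-point-free net permutation while keeping the alphabet finite. The propagation of the deterministic-component relation around an arbitrary cycle is clean precisely because the colors commute and square to the identity, so the cyclic composition collapses to the $\mathbb{F}_2$-sum $\ell(C)$ and Lemma \ref{lem:colors} applies verbatim; all the care goes into securing injectivity of $\ell$ on the cycle space via the spanning-forest / fundamental-cycle labelling.
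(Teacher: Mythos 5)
Your proposal is correct and follows essentially the same route as the paper: the upper bound comes from the forest decomposition already noted before the theorem, and the lower bound comes from attaching fixed-point-free bit-flip permutations to the edges so that any single non-contextual behavior whose covered edge set contains a cycle would require a deterministic assignment fixed by a nonzero composition of colors, contradicting Lemma \ref{lem:colors}. The only substantive difference is that the paper assigns a distinct elementary color to each of the $|E|$ edges (alphabet of size $2^{|E|}$), whereas your spanning-forest labelling through the cycle space achieves the same fixed-point-free property on every cycle with the smaller alphabet $2^{|E|-|V|+c}$ --- a mild economization --- and your explicit propagation of the relation $D_j(w)=\sigma_e(D_j(u))$ around a cycle makes precise the step that the paper's proof states only tersely.
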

\begin{proof}
    If the graph is acyclic (arboricity equals zero by definition), it is always non-contextual.
    Let us assume that the graph $G$ has $m$ edges and arboricity $\Upsilon(G)$. 
    We will assign a different color to each edge.
    From the definition of arboricity, we know that we can not divide it into less than $\Upsilon(G)$ graph without cycles. 
    Let us assume that the rank of contextuality is less than $\Upsilon(G)$.
    Then at least one of the non-contextual behaviors simulating the given behavior is represented by a subgraph that contains a cycle.
    Since all edges have different colors, also this cycle edges have different colors.
    We will then assign probabilities to each edge according to its color (permutation) using the formula
    \begin{equation}
        \label{Contextforest}
        \begin{cases}
            P(o_1 o_2 \mid C_i) = \frac{1}{2^m} &\text{ if } o_2 = \tau_i (o_1),\\
            P(o_1 o_2 \mid C_i) = 0 &\text{ otherwise}.
        \end{cases}
    \end{equation}
    From Lemma \ref{lem:colors} we know that permutations represented by the composition of different colors do not have fixed points.
    Therefore, there does not exist any deterministic assignment of values for the non-contextual behavior simulating the given behavior which leads to a contradiction.
\end{proof}

\section{Comparison to Other Contextuality Measures}
\label{sec:comparison}

Here we show that log-rank yields different results than commonly used measures for specific behaviors, justifying that it is a novel measure of contextuality. 
We first introduce the contextual fraction \cite{Abramsky_contextualfrac}, then the analogous robustness of contextuality \cite{Amaral_Geo}, followed by relative entropy of contextuality \cite{Grudka_contextuality}, and finally the contradiction number \cite{Ramanathan2014}. 

\begin{definition}[Contextual fraction]
    \label{def:measure1}
    The definition of the contextual fraction is given by the formula
    \begin{equation}
        \mathcal{CF}(P) \coloneqq  \min \left\{ \lambda : P= \lambda P'+(1-\lambda)P_{NC} \right\}.
    \end{equation}
\end{definition}

\begin{definition}[Robustness of contextuality]
    \label{def:measure2}
    The robustness of contextuality is defined as
    \begin{equation}
        \mathcal{R}(P) \coloneqq \min \left\{ \lambda : (1- \lambda) P+\lambda P_{NC} \in \mathcal{C} \right\}.
    \end{equation}
\end{definition}

\begin{definition}[(Uniform) relative entropy of contextuality]
    \label{def:measure3}
    The definition of the relative entropy of contextuality is given by
    \begin{equation}
        \begin{split}
           X_{\mathrm{max}}(P) & \coloneqq\\
           \sup_{P(X)} & \min_{NC(A|X)} \sum_{X \in E} P(X) D(P(A \mid X) \;||\; NC(A \mid X)) 
        \end{split}
    \end{equation}
    where $D$ is the relative entropy distance (for definition see for example \cite{cover1991elements}).
    We also define different variant of above quantity called uniform relative entropy of contextuality as:
    \begin{equation}
        X_{u}(P) \coloneqq \min_{NC(A|X)} \sum_{X \in E} \frac{1}{|E|} D(P(A \mid X) \;||\; NC(A \mid X)).
    \end{equation}
\end{definition}

\begin{definition}[Contradiction number]
    \label{def:measure4}
    The contradiction number of behavior $B$ is the minimal number of observables (vertices of the graph) that have to be removed from $B$ in order to obtain non-contextual behavior.
\end{definition}

\begin{lemma}
    \label{lem:compare}
    The log-rank measure of contextuality differs from other contextuality measures presented in the Definitions \ref{def:measure1}, \ref{def:measure2}, \ref{def:measure3}, \ref{def:measure4} and the memory of contextuality of \cite{Kleinmann_memory, Fagundes}.
\end{lemma}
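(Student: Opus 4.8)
The plan is to prove distinctness in the strongest natural sense: for each competing measure $M$ I will exhibit a pair of behaviors on which $RC_2$ and $M$ disagree, either by assigning them equal $RC_2$ but different $M$, or by reversing the induced ordering. Since any two measures that were monotone functions of one another could never disagree in this way, producing even a single such witness pair for each $M$ suffices to conclude that $RC_2$ carries information not captured by $M$, and conversely.

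First I would dispatch the three continuous measures --- the contextual fraction (Definition \ref{def:measure1}), robustness (Definition \ref{def:measure2}), and relative entropy (Definition \ref{def:measure3}) --- simultaneously by a noise family. Take the $3$-cycle anticorrelation behavior $P$ of Section \ref{sec:unboundedness}, which is contextual with $RC(P)=2$, so $RC_2(P)=1$, and, fixing a non-contextual $P_{NC}$ on the same three-cycle graph, form $P_t \coloneqq (1-t)P + t\,P_{NC}$. For all sufficiently small $t>0$ the behavior $P_t$ stays contextual, so $RC_2(P_t)\geq 1$, while mixing the two-element cover of $P$ with $P_{NC}$ together with the mixture bound of the Remark in Section \ref{sec:aditivity} gives $RC_2(P_t)\leq\max\{1,0\}=1$; hence $RC_2$ is pinned to $1$ on an interval of $t$. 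By contrast each of the three listed measures is continuous and strictly monotone in $t$, since adding noise strictly shrinks the distance to $\mathcal{N}$. A measure that varies strictly along a path on which $RC_2$ is constant cannot be a function of $RC_2$, settling all three cases at once; the same example also exhibits that $RC_2$ is discrete-valued whereas these three are not.

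Next I would treat the contradiction number (Definition \ref{def:measure4}), which is integer-valued and so is not separated by the continuity argument; here I would exhibit two behaviors of equal log-rank but different contradiction number. Compare the $5$-cycle behavior with the behavior built on $K_4$ by Theorem \ref{thm:arb}. Both have $RC_2=1$: the cycle case is the bound of Section \ref{sec:unboundedness}, and $\Upsilon(K_4)=\lceil 4/2\rceil=2$ yields a rank-$2$ behavior by Theorem \ref{thm:arb}. Their contradiction numbers differ, however: deleting one vertex of the $5$-cycle leaves a path, a forest and hence non-contextual, so its contradiction number is $1$; whereas any induced subgraph of $K_4$ on three vertices is a triangle whose three distinctly-colored edges compose to a fixed-point-free permutation by Lemma \ref{lem:colors}, hence is contextual, forcing deletion down to two vertices and a contradiction number of $2$. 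Equal $RC_2$ with unequal contradiction number proves these two measures differ. Finally, for the memory of contextuality of \cite{Kleinmann_memory, Fagundes}, I would fix the Peres--Mermin scenario (Figure \ref{fig:PM}) as a common testbed: its log-rank is $RC_2=1$, realized by the two non-contextual behaviors $P_{NC}^1,P_{NC}^2$ of the automaton in Figure \ref{fig:aut} and bounded below by $1$ since the square is contextual, whereas the Kleinmann--Fagundes memory is defined in the observable-interrogation model where repeated sequential measurements must be reproduced consistently, requiring strictly more than two internal states and thus a value exceeding $\log_2 2=1$; combined with the fact, noted in Section \ref{sec:scenario}, that the two quantities live in genuinely different operational models, the numerical mismatch on a shared scenario completes the lemma.

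The main obstacle will be the quantitative claims rather than the logical skeleton. I must verify that the three continuous measures are \emph{strictly} (not merely weakly) monotone in the noise parameter $t$, so that $RC_2$ being constant genuinely separates them; and for the contradiction-number comparison I must confirm that an induced sub-behavior of the $K_4$ construction is non-contextual exactly when the induced subgraph is acyclic, which is precisely what Lemma \ref{lem:colors} supplies for the surviving cycles. For the memory comparison the delicate point is that the two measures are not defined on the same operational model, so the mismatch must be presented as a value computed within each model on the same underlying set of observables, rather than as a direct inequality between identical objects.
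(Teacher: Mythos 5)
Your proposal is correct and follows the same basic strategy as the paper --- exhibiting, for each competing measure, a witness (or a one-parameter family) on which that measure and $RC_2$ disagree --- but your witnesses and packaging differ in ways worth noting. For the three continuous measures the paper computes explicit values on two separate families ($\lambda\,MS+(1-\lambda)U$ for the contextual fraction and robustness, and the isotropic $PR_\alpha$ box for the relative entropy), whereas you handle all three at once by observing that $RC_2$ is locally constant along a noise path $P_t=(1-t)P+tP_{NC}$ on the $3$-cycle while the continuous measures vary; this is cleaner and avoids the paper's somewhat opaque bookkeeping of ``$2^2$ behaviors'' for the mixture. Two small cautions there: you only need the continuous measures to be \emph{nonconstant} on the interval where $P_t$ stays contextual, which follows cheaply from convexity (e.g.\ $\mathcal{CF}(P_t)\le(1-t)\mathcal{CF}(P_0)$ together with $\mathcal{CF}(P_t)>0$), so you should not burden yourself with proving strict monotonicity; and your upper bound $RC_2(P_t)\le 1$ leans on the mixture Remark of Section \ref{sec:aditivity}, which the paper states without proof --- it is safer to exhibit the explicit two-element cover $\{(1-t)N_i+tP_{NC}\}_{i=1,2}$ directly. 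For the contradiction number you use $C_5$ versus the $K_4$ construction of Theorem \ref{thm:arb} (equal $RC_2$, contradiction numbers $1$ and $2$), which is logically parallel to the paper's $n$-cycle versus wheel-graph pair and its sharper $K_{m,n}$ example; your triangle argument via Lemma \ref{lem:colors} is sound. For the memory of contextuality, your phrase ``requiring strictly more than two internal states'' reads as if you intend to derive this, which is a nontrivial result; the paper simply cites the known value of $3$ states from \cite{Kleinmann_memory, Fagundes} against $RC=2$, and you should do the same rather than argue it from the operational model.
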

\begin{proof}
    Here we provide a simple example of the convex combination of the contextual uniform Mermin-square behavior ($MS$) and isotropic behavior of equally-weighted outcomes ($U$), i.e., 
    \begin{equation}
        C = \lambda MS +(1- \lambda)U.
    \end{equation}

    Recall from Figure \ref{fig:PM} that $RC_2(PM)=1$, with two uniform behaviors being sufficient for simulation. 
    Each individual behavior is sufficient to simulate the isotropic behavior. 
    Then the number of behaviors becomes $2^{2}$ to account for the convex combination selection weighting. 

    It is evident from the fact that Mermin's square has contextual fraction $1$, that $\mathcal{CF}(C)=\lambda$. 
    Similarly, $\mathcal{R}(C)=1-\lambda$ with $\lambda\in [0,1]$. 
    
    To show the difference from (uniform) relative entropy of contextuality, we will use an example of isotropic Popescu-Rohrlich box denoted as $PR_{\alpha}$ with $\alpha > 3/4$.
    It is easy to see that $RC(PR_{\alpha})=2$ and $RC_2(PR_{\alpha})=1$.
    On the other hand it was shown in \cite{Grudka_contextuality} that 
    \begin{equation}
        X_{\mathrm{max}}(PR_{\alpha}) = X_{u}(PR_{\alpha}) = \log \left( \frac{4}{3^\alpha} \right) + h(\alpha)
    \end{equation}
    where $h(\alpha) \coloneqq -\alpha \log \alpha (1 - \alpha) \log (1 - \alpha)$.
    Therefore, since the values are clearly different, (uniform) relative entropy of contextuality is different from our new measures.
    Concluding, in all of these cases, the log-rank differs from these previously explored measures.
    
    We compare it now with measures that take integer values. 
    We first note that our measure differs from the memory of contextuality for the Pers-Mermin set of observables, as it equals $3$ \cite{Kleinmann_memory, Fagundes}, while the value of our measure equals $2$ (see Figure \ref{fig:aut}).
    Regarding the number of contradictions, the latter takes value $1$ for $n$-cycle graph, and $2$ for the wheel graph  $W_n$ (see Figure \ref{fig:graphs}), while the arboricity (which is equal to the rank of contextuality for specially constructed behaviors), is equal to $2$ for both the graphs (see Figure. \ref{fig:graphs}).
    An example showing a more significant difference than $1$ comes from Theorem \ref{thm:arb}. 
    It is shown that for the bipartite graph $K_{m,n}$ there exists a behavior with the $RC$ equal to $\lceil (mn)/(m+n-1) \rceil$ (see Lemma \ref{lem:bip}). 
    On the other hand, it is easy to check that the contradiction number equals $\min \{n,m\}-1$. 
    It is easy to see that if there are left $2$ observables in both bipartitions, then a cycle (of length $4$) is left in a graph. 
    This cycle by construction can be contextual.
    Hence in one bipartition, there must be left only $1$ vertex, and the other bipartition can be left untouched.
    It is then clear that the measures differ for $K_{n,m}$ by about twice.
\end{proof}

\section{Towards potential applications}
\label{sec:applications}

In this section, we show possible applications of our measure. 
We will do it regarding two unrelated phenomena.  
In the first one, we propose a benchmark useful for data management of probabilistic databases inspired by the simulating automaton (\ref{automaton}) and the rank of contextuality.
The second one concerns the ways of attack by the so-called non-signaling adversary. 
In the second example, we focus on the Bell non-locality scenario -- a specific case of the contextuality in which commutation relations between involved observables are enforced by the no faster than light condition \cite{Bell-nonlocality}. 
More precisely, the corresponding hypergraphs are bipartite graphs $K_{m,n}$ where one party has $m$ and the other $n$ observables.

\subsection{Management of unclean probabilistic databases}

The article \cite{Abramsky2013} introduced a description of contextuality in the language of relational databases (RDBs). 
This connection resulted in many contributions to the empirical description of contextuality  \cite{Abramsky_contextualfrac}, the discovery of hierarchies between the contextuality demonstrations \cite{Abramsky2011, AbramRui2018} and the underlying logical models \cite{kishida2016}.
While such a connection has shown the value of importing data science methods into the study of quantum contextuality, we propose a complementary approach: The export of methods and results from contextuality to data management.

Before presenting our proposal, it will be instructive to define some fundamental concepts of data management in the language introduced in our work.

A \textit{database} consists of a collection of tables, in which the first row presents the properties of one or more entities and the subsequent rows show the possible simultaneous values of such properties.
Data science nomenclature for tables and properties is \textit{relations} and \textit{attributes}, respectively \cite{page1990}. 
In quantum information, the entities studied are always physical systems whose attributes identify with observables whose values an agent could tabulate in a context.

Reversing the previous assignment, we can imagine a database as a collection of an entity's observables $A_{1}, A_{2}, \ldots, A_{|X|}$ tabulated in the first row and each possible output $\mathbf{a} = a_{1}, a_{2}, \ldots, a_{|X|}$ in subsequent rows, with a different table for each context $X$.
The possible outputs $\mathbf{a}$  are denoted as \textit{tuples} over the corresponding table (relation) with attributes $\{X\}$ \cite{page1990}. 
In a database, tuples usually provide the data stored about each entity studied.

So far, the above would suffice to describe in the language of contextuality the fundamental concepts of relational databases and recover the connection described in \cite{Abramsky2013}. 
However, we can further extend our description to encompass a field that emerges at the intersection of machine learning and database theory: Probabilistic databases \cite{Sa2019, Suciu2020}.

Data management systems such as NELL \cite{Nell2018}, DeepDive \cite{Deep2015}, and YAGO \cite{Yago2013} continuously crawl the Web to extract structured information. 
Also, private Projects such as Microsoft's Probase \cite{Microsoft2012} or Google's Knowledge Vault \cite{Google2014} similarly learn structured data from text and fill their databases with millions of entities and billions of facts. 
The above information extraction systems employ statistical machine learning techniques to produce a probabilistic prediction. 
Therefore, it is common to interpret such~large-scale knowledge bases~through probabilistic semantics. 

The standard framework to represent probabilistic data is precisely that of probabilistic databases (PDBs).
In the language of our work a PDB is simply a database where in each table $X$ we associate a probability $p_{i}$ and a deterministic distribution $ T_{i}\left(\mathbf{a}=\mathbf{a}_i\mid X\right)$ to every possible tuple $\mathbf{a}_i$, which is one if $\mathbf{a}=\mathbf{a}_i$ and zero otherwise. 
Asking for the probability atributes $A_{1}, A_{2}, \ldots, A_{\left|X\right|}$ to have the values $\mathbf{a}$ in table $X$ is a particular kind of \textit{query} in database language. 
The answer to the previous query in a PDB represented by $\left\{ T_{i}\left(\mathbf{a}=\mathbf{a}_i\mid X\right),p_{i}\right\}$ is given by a probability $P\left(\mathbf{a}\mid X\right)$:
\begin{equation}
    \label{Probqueryanswer}
    P(\mathbf{a} \mid X) = \sum_{i} p_{i} T_{i} (\mathbf{a}=\mathbf{a}_i \mid X).
\end{equation}
The above formally reproduces the essential definitions of PDBs as presented in \cite{Sa2019}. 
Here, we remark that in (\ref{Probqueryanswer}) the $T_{i}\left(\mathbf{a}=\mathbf{a}_i\mid X\right)$ are defined tablewise, i.e., for a particular set $X$. 
Moreover, tuple distributions $T_i$ are not necessarily the result of restricting a global deterministic assignment as happens for $D_{i}$ in (\ref{NCbehavior}) where indeed exist a global deterministic joint distribution $J$.

Now, the main question arises: \textit{Could behaviors generated by PDBs show contextuality?}, i.e., that no single table PDB with $X=V$ could reproduce all $P\left(\mathbf{a}\mid X\right)$?
We argue in the affirmative, and we illustrate an exemplary situation with a PDB generating the 3-cycle behavior presented in Section \ref{sec:unboundedness} (see Figure \ref{fig:PDBsContex}).

Note, we could also consider cases in which a violation of condition (\ref{consistencypairwise}) takes place; however, in such a situation, contextuality would be more trivial since the direct probability assignment of tables would disagree on common attributes, forcing a necessary split into a subset of PDBs. 
Such a situation would be easy to detect by applying analogous queries as in testing pairwise agreement of tables in relational databases, which can be answered polynomially in the number of tables \cite{Beeri1983, Abramsky2013}. 
For this reason, we assume that any mismatch is corrected, and we focus on the non-trivial case of a PDB equivalent to a collection $P\left(\mathbf{a} \mid X\right)$ which for all X satisfies the consistency condition (\ref{consistencypairwise}). 
The above PDB we denote as \textit{pairwise consistent}. 

\begin{figure*}[t]
    %\fbox{
    \includegraphics[width=\textwidth]{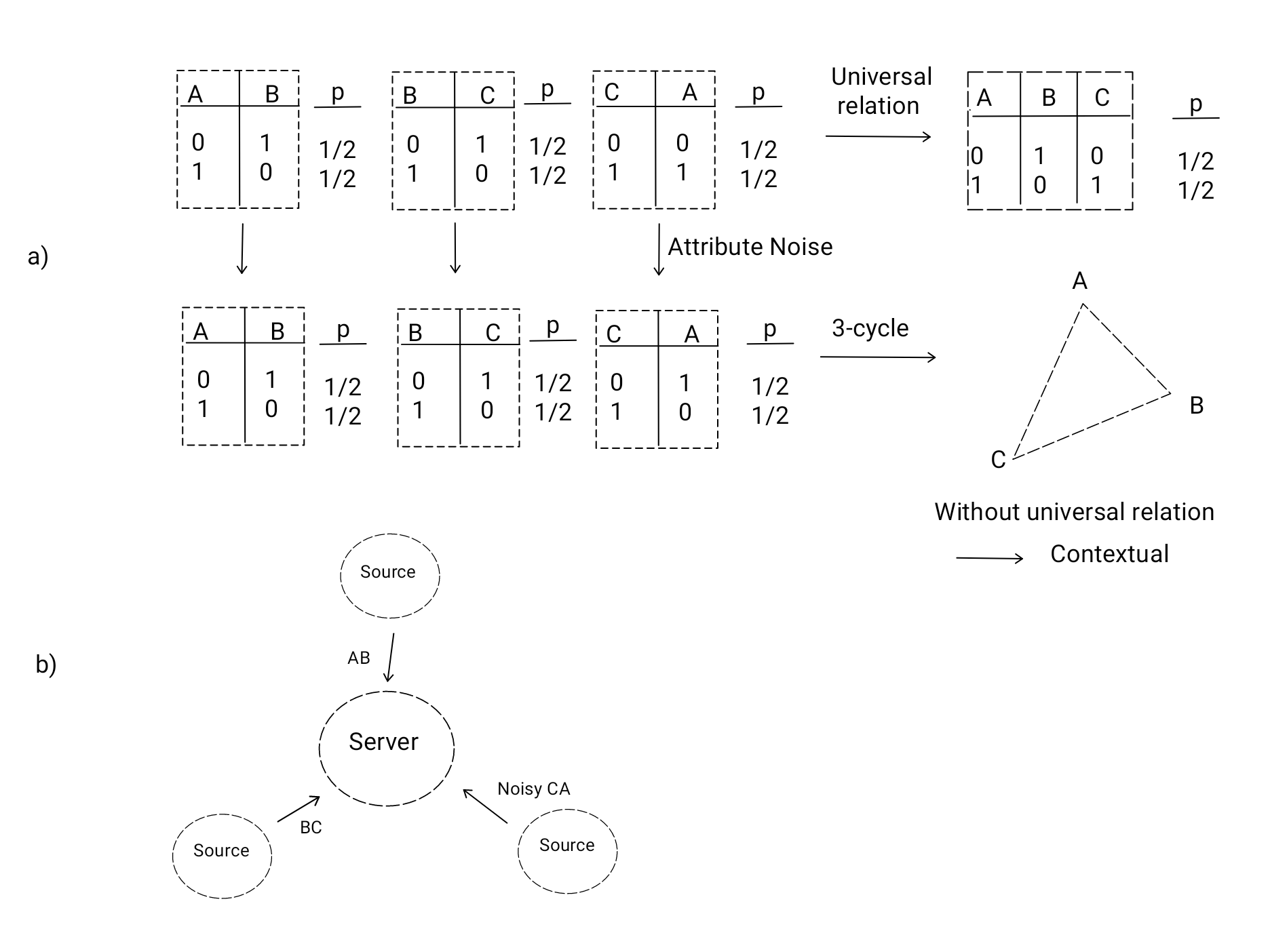}
    %}
    \caption{
        \label{fig:PDBsContex}
        (a) An example of how simple attribute noise could modify tuples in a source database to obtain a PDB that a single table PDB can not reproduce. 
        (b) A scheme in which a server makes up a noisy PDB by collecting data from sources at different network nodes. 
        In the previous typical situation, attribute noises due to transmission could lead to  PDBs equivalent to contextual behaviors like those studied in quantum contextuality. 
    }
\end{figure*}

A fundamental problem in managing PDBs is determining the source database when affected by noise, which can be classified into label noises or attribute noises \cite{ZhuW2004}. 
The former are modifications in the attributes designated to the tables, while the latter are modifications of the values associated with the attributes (tuple corruption), which are more common and harmful \cite{ZhuW2004, Nett2010}. 

It is standard to assume that a clean PDB contains tables that could be merged into a single table (non-contextual), of which the others turn out to be reductions (marginals) to a PDB for a subset of attributes. 
For example, in Figure \ref{fig:PDBsContex}a) we see a PDB with a database composed of three tables, each with attributes $X_{1}=\left\{ A B\right\}$, $X_{2}=\left\{ B C\right\} $, and $X_{3}=\left\{ C A\right\}$ and for each table we have the PDB description given by 
\begin{equation}
    \left\{ \left\{ T_{1}\left(\mathbf{a}=01\mid X_{j}\right),\frac{1}{2}\right\} ,\left\{ T_{2}\left(\mathbf{a}=10\mid X_{j}\right),\frac{1}{2}\right\} \right\}_{j=1}^{2},
\end{equation}
and 
\begin{equation}
    \left\{ \left\{ T_{1}\left(\mathbf{a}=00\mid X_{3}\right),\frac{1}{2}\right\} ,\left\{ T_{2}\left(\mathbf{a}=11\mid X_{3}\right),\frac{1}{2}\right\} \right\},
\end{equation}
such a PDB could be merged to a single table with $X=V=\left\{ ABC\right\}$ with a PDB: 
\begin{equation}
    \begin{split}
        \left\{ \left\{ T_{1}\left(\mathbf{a}=010\!\mid\!\left\{ ABC\right\} \right),\frac{1}{2}\right\},\right.\\
        \left. \left\{ T_{2}\left(\mathbf{a}=101\!\mid\!\left\{ ABC\right\} \right),\frac{1}{2}\right\} \right\}
    \end{split}
    \label{unirel}
\end{equation}
denoted \textit{universal relation} in database nomenclature \cite{page1990, Beeri1983}.
However, the situation can be drastically different in the case of unclean PDBs, which are the result of attribute noise affecting their separate tables. 
For example, consider a database distributed over a network and a server collecting the information for each table $X_j$ from a different source node Figure \ref{fig:PDBsContex}b). 
If, in such a case, a systematic noise affects the values of the tuples, as shown in Figure \ref{fig:PDBsContex}a), the PDB generated on the server would be 
\begin{equation}
    \left\{ \left\{ T_{1}\left(\mathbf{a}=01\mid X_{j}\right),\frac{1}{2}\right\} ,\left\{ T_{2}\left(\mathbf{a}=10\mid X_{j}\right),\frac{1}{2}\right\} \right\}_{j=1}^{3}
\end{equation}
which is equivalent to the 3-cycle contextual behavior.

To date, we are not aware of any study on the contextuality of PDBs in databases, which we believe is due to methodological contingencies: a) most noise studies are about label noise \cite{ZhuW2004, SAEZ2022}, while, as in our example we expect systematic attribute noises to generate the contextual behaviors b). 
Moreover, it is usual to assume that the attributes are weakly correlated \cite{Nett2010}, and the repair methods rule out strongly correlated attributes a priori \cite{kotu2018}, which hides any potential contextuality in the studied databases since non-trivial contextual behaviors happen only for strongly correlated attributes.

If realistic PDBs exhibit contextuality, the methods and results of our work would directly contribute to the management of unclean PDBs.
For example, the memory size of the automaton (\ref{automaton}) could be used as a benchmark to study storage overhead due to attribute noise, applying the rank of contextuality as a quantifier. 
Furthermore, minimal sets of states of the automaton's memory show potential repairs of the database, which we could classify according to prior knowledge of the noise. 
In fact, the universal relation (\ref{unirel}) is equivalent to the non-contextual behaviour (\ref{repair1}) of our example from Section \ref{sec:unboundedness}.
When managing PDBs in situations like in Figure \ref{fig:PDBsContex}b), choosing from sets of states like (\ref{repair1}) or (\ref{repair2}) will depend on the reliability we assigned to sources or transmission channels. 

Moreover, the construction of contextual behaviors such as those presented in equation (\ref{Contextforest}) allows us to understand the possible results of data corruption and their generating operations. 
Indeed, in the language of quantum resources, the previous noises would be instances of channels that generate contextuality. 
In this sense, the automaton realization of such behaviors and channels would allow noise modeling in databases, essential for controlling and cleaning PDBs.

The simplicity and naturalness with which a class of noise common in databases could generate a contextual behavior in PDBs is an indication that such a phenomenology should affect data mining in large-scale knowledge databases. 
Our conviction is that testing such a phenomenon in actual databases is a relevant challenge due to its potential consequences for the field of data management. 
However, the experimental demonstration of the previously described phenomenology goes beyond the scope of this work, and we consider it a goal for future research in collaboration with data scientists.

\subsection{Randomness Amplification}

The aim of the randomness amplification is to take outputs from some source of weak randomness and process it to obtain a more random sequence.
There are various models of a weak source of randomness, but we will focus here on the most famous one proposed by Santhat and Vazirani \cite{SV}.
The source is called $\varepsilon$-SV source.
We say that the source that produces sequence of binary outputs $S_1, S_2, \ldots$ is $\varepsilon$-SV source if
\begin{equation}
    \frac{1}{2} - \varepsilon \leq P(S_n \mid S_{n-1}, S_{n-2}, \ldots, S_1, e) \leq \frac{1}{2} + \varepsilon
\end{equation}
where $e$ represents an arbitrary random variable prior to $S_1$, 
The $\varepsilon \in [0, 1/2]$ is a parameter that describes how random the source is.
If $\varepsilon = 0$, then we obtain a truly random sequence.
On the other hand, if $\varepsilon = 1/2$, the source could even be deterministic.
We then can formally state that the goal of randomness amplification is to take some $\varepsilon$-SV source and process its output in such a way that we obtain a sequence that fulfills the definition of $\varepsilon'$-SV source for some $\varepsilon' < \varepsilon$.

Santha and Vazirani \cite{SV} proved that the randomness amplification of a single $\varepsilon$-SV source using a classical extractor is impossible.
However, in the seminal paper, Colbeck and Renner \cite{Colbeck2012} showed that the task can be achieved using quantum devices.
Their result opened a new field in quantum information theory and led to various new randomness amplification protocols.

One of the most important questions was: Is device-independent quantum randomness amplification possible when the weak source is correlated with the quantum device?
Wojewódka et al.\ \cite{Wojewodka2017} investigated this problem and proved that although some correlations are allowed (the one that fulfills the so-called SV-box condition), the amplification is impossible for arbitrary correlations. 
That impossibility comes from the fact that the adversary can, for each input, use one of the deterministic behaviors that are appropriate for that input and indistinguishable from the honest quantum implementation.  

%%%%%%%%%%%%%

It is an identical mechanism as the simulating contextual behavior by some number of non-contextual ones described in Section \ref{sec:introduction}.
Therefore, we can use the rank of contextuality to measure the required resources for such a form of attack. 
By resources, we mean here the memory and the amount of correlation. 
Furthermore, the rank of contextuality could help determine the number of resources needed for other attacks (such as side-channel attacks) on different device-independent protocols.

\section{Concluding remarks}
\label{sec:discussion}

We have introduced a measure of contextuality, the (log-)rank of contextuality, and confirm its novelty by proving that it differs from other known measures. 
Subsequently, we show that the RC measure is monotonic under a broad class of operations. 
Moreover, we provide a concrete example of (possibly supraquantum) contextual behaviors whose rank of contextuality is any natural number. 
As potential applications, we demonstrate that our measure naturally captures the cost of an attack by an adversary who correlates a source of weakly-private randomness with an amplifying device. 
Additionally, we considered possible contributions to the field of database management. 
In particular, we proposed a new avenue connecting databases and contextuality by showing that contextual models can be valuable semantics for probabilistic databases.
    
Specifically, we present a simple example where a probabilistic database presents contextuality generated by attribute noise on strongly correlated attributes. 
Our analysis suggests that models such as the automaton (\ref{automaton}) would allow us to simulate the generation of attribute noise and propose repair schemes beyond the usual assumption of weakly correlated attributes \cite{Nett2010, kotu2018}. 
In addition, such a model would quantify the storage overhead due to attribute noise by computing the rank of contextuality.
     
To confirm the presence of contextuality in actual probabilistic databases, one could test the correlations between attributes and determine if they violate non-contextual inequalities \cite{Cabello2008, Klyachko2008, Cabello2013}. 
This possibility suggests that quantum contextuality methods would detect particular data corruptions analogously to how non-locality techniques serve to detect causal connections in Bayesian networks \cite{Nery2018, Miklin2021}. 
The preceding suggests research in databases that we intend to develop in future works.
    
Finally, it would be interesting to construct a quantum behavior having the rank of contextuality different from the power of 2 (the power of 2 is achieved by a tensor product of the Peres-Mermin behaviors). 
It also seems attractive to investigate if the RC measure is more directly related to the memory of contextuality, going beyond the difference in the Peres-Mermin behavior.

%===============================================================================
% Acknowledgments
%===============================================================================

\begin{acknowledgments}
    KH acknowledges National Science Centre, Poland, grant Sonata Bis 5, 2015/18/E/ST2/00327 and National Science Centre, Poland, grant OPUS 9, 2015/17/B/ST2/01945.
	MS acknowledges National Science Centre, Poland, grant SHENG 1, 2018/30/Q/ST2/00625.
	RS acknowledges financial support by the Foundation for Polish Science through TEAM-NET project (contract no.\ POIR.04.04.00-00-17C1/18-00).
	We acknowledge partial support by the Foundation for Polish Science (IRAP project, ICTQT, contract no.\ MAB/2018/5, co-financed by EU within Smart Growth Operational Programme). The 'International Centre for Theory of Quantum Technologies' project (contract no.\ MAB/2018/5) is carried out within the International Research Agendas Programme of the Foundation for Polish Science co-financed by the European Union from the funds of the Smart Growth Operational Programme, axis IV: Increasing the research potential (Measure 4.3). 
	
	%The authors would like to thank 
	%
	%for useful comments.
		
    %Author Contributions: All authors researched, collated, and wrote this paper.
	
    %Competing interests: The authors declare that there are no competing interests.	
	
    %Data sharing does not apply to this article as no datasets were generated or analyzed during the current study.

    %Supplementary materials:
\end{acknowledgments}

%===============================================================================
% Bibliography
%===============================================================================
	
\bibliography{bibliography}

%===============================================================================
% Appendices
%===============================================================================
	
%\onecolumngrid
%\appendix

%\section{} 

\end{document}